\numberwithin{equation}{section}
	\newtheorem{theorem}[equation]{Theorem}
	\newtheorem{lemma}[equation]{Lemma}
	\newtheorem{corollary}[equation]{Corollary}
	\newtheorem*{theorem*}{Theorem}
        \newtheorem*{fctda}{Fundamental Conundrum of Topological Data Analysis}
\theoremstyle{definition}
	\newtheorem{example}[equation]{Example}
\theoremstyle{remark}
	\newtheorem*{remark*}{Remark}
\newcommand{\R}{\mathbb{R}}
\newcommand{\N}{\mathbb{N}}
\newcommand{\Comp}{\mathcal{C}}
\newcommand{\isom}{\cong}
\newcommand{\st}{\ \mid \ }
\newcommand\abs[1]{\lvert#1\rvert}
\newcommand\norm[1]{\lVert#1\rVert}
\newcommand{\Output}{\mathcal{O}}
\DeclareMathOperator{\Dgm}{Dgm}	
\DeclareMathOperator{\im}{im}
\DeclareMathOperator{\supp}{supp}
\DeclareMathOperator{\pers}{pers}
\DeclareMathOperator{\sign}{sign}
\newcommand{\vect}{}
\begin{document}

\title{Stabilizing the unstable output of persistent homology computations}

\author{Paul Bendich}
\address{Department of Mathematics, Duke University, and Geometric Data Analytics, Inc.}
\email{bendich@math.duke.edu}
\author{Peter Bubenik}
\address{Department of Mathematics, University of Florida}
\email{peter.bubenik@ufl.edu}
\author{Alexander Wagner}
\address{Department of Mathematics, University of Florida}
\email{wagnera@ufl.edu}


\begin{abstract}
We propose a general technique for extracting a larger set of stable information from persistent homology computations than is currently done. The persistent homology algorithm is usually viewed as a procedure which starts with a filtered complex and ends with a persistence diagram. This procedure is stable (at least to certain types of perturbations of the input). This justifies the use of the diagram as a signature of the input, and the use of features derived from it in statistics and machine learning. However, these computations also produce other information of great interest to practitioners that is unfortunately unstable. For example, each point in the diagram corresponds to a simplex whose addition in the filtration results in the birth of the corresponding persistent homology class, but this correspondence is unstable. In addition, the persistence diagram is not stable with respect to other procedures that are employed in practice, such as thresholding a point cloud by density. We recast these problems as real-valued functions which are discontinuous but measurable, and then observe that convolving such a function with a suitable function produces a Lipschitz function. The resulting stable function can be estimated by perturbing the input and averaging the output. We illustrate this approach with a number of examples, including a stable localization of a persistent homology generator from brain imaging data.
\end{abstract}

\maketitle

\section{Introduction} \label{sec:intro}

Persistence diagrams, also called bar codes, are one of the main tools in topological data analysis \cite{Carlsson2009,FrosLand99,Edelsbrunner2010,ghrist:survey}. In combination with machine-learning and statistical techniques, they have been used in a wide variety of real-world applications, including the assessment of road network reconstruction \cite{Ahmed2014Roads}, neuroscience \cite{cbk:ipmi2009}, \cite{Bendich2015trees}, vehicle tracking \cite{Bendich2015tracking}, object recognition \cite{chunyuan:2014}, protein compressibility \cite{Gameiro:2015b}, and protein structure~\cite{giseon:maltose}.

Put briefly, these persistence diagrams are multi-sets of points in the extended plane, and they compactly describe some of the multi-scale topological and geometric information present in a high-dimensional point cloud,
or carried by a real-valued function on a domain.
Several theorems \cite{CohenSteiner2007,Chazal2009b,CohenSteiner2010} state that persistence diagrams are stable with respect to certain variations in the point-cloud or functional input, and so the conclusions drawn from them can be taken with some confidence.

On the other hand, there is additional potentially very useful but unstable information produced during the computation of persistence diagrams. 
For example, a point far from the diagonal in the degree-zero persistence diagram represents a connected component with high persistence. This component first appears somewhere and the computation that produces the persistence diagram can be used to find its location. However this location is not stable: as we will describe below, a small change in the input will cause only a small change in the persistence of this connected component, but it can radically alter the location of its birth.
We summarize this as follows.

\begin{fctda} 
  Users of topological data analysis would like to find the simplices or cycles corresponding to the birth of the most significant pairings of critical values. However, unlike the paired critical values, these simplices and cycles are unstable.
\end{fctda}

In addition, persistent homology computations may rely on parameters such that the output persistence diagram is not stable with respect to changes of these parameters.

\subsection{Our Contribution}
\label{sec:contribution}

This paper introduces a method for stabilizing desirable but unstable outputs of persistent homology computations. The main idea is the following. On the front end, we think of a persistent homology computation $\Comp$ as being parametrized by a vector $\vect{a} = (a_1, \ldots, a_n)$ of real numbers. These parameters could specify the input to the computation (e.g. the coordinates of the vertices of a simplicial complex) or they could specify other values used in the computation (e.g. threshold parameters used in de-noising or bandwidths for smoothing). For a given choice of $\vect{a}$, we get a persistence diagram. On the back end, we consider a function $p$ that extracts a real-number summary from a persistence diagram. For example, $p$ might extract the persistence of a homology class created by the addition of a specific edge in a filtered simplicial complex, or it might be an indicator function on whether or not the longest bar was born by the addition of a simplex contained in a fixed region of the input space, or it may indicate whether or not a chosen representative geometric cycle intersects a given region. 
The composite function $h$ that maps the parameter vector to the real number  need not be continuous, but it will in many cases be {measurable}. 
We convolve this function with a Gaussian (or indeed any Lipschitz function) to produce a new Lipschitz function that carries the persistence-based information we desire.

Our main theoretical results (Theorems \ref{thm:stability1}, \ref{thm:stability2}, and \ref{thm:stability3}) give conditions on functions $h$ and $K$ (where $K$ will usually be a kernel) that guarantee that the convolution $h * K$ is Lipschitz with specified Lipschitz constant.
From these we obtain the following, where more precise statements are given as Corollaries \ref{cor:triangular}, \ref{cor:epanechnikov}, and \ref{cor:gaussian}.

\begin{theorem}
  If $h$ is locally essentially bounded then for the triangular and Epanechnikov kernels, $h * K$ is locally Lipschitz.
  If $h$ is essentially bounded then for the Gaussian kernel, $h * K$ is Lipschitz.
\end{theorem}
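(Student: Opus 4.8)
The plan is to prove both claims by a single direct estimate of $(h*K)(x) - (h*K)(y)$, handling the compactly supported kernels (triangular and Epanechnikov) and the Gaussian separately; the split in hypotheses will mirror the split in the argument. In every case I would first observe that $h*K$ is well defined: for the triangular and Epanechnikov kernels $K(x-\cdot)$ is bounded with compact support and $h$ is essentially bounded on that support, while for the Gaussian $K\in L^1(\R)$ and $h\in L^\infty(\R)$, so $t\mapsto h(t)K(x-t)$ is integrable in each case. The common starting point is
\begin{equation*}
 (h*K)(x) - (h*K)(y) = \int_\R h(t)\bigl(K(x-t) - K(y-t)\bigr)\,dt ,
\end{equation*}
whence $\abs{(h*K)(x) - (h*K)(y)} \le \int_\R \abs{h(t)}\,\abs{K(x-t)-K(y-t)}\,dt$, and everything reduces to bounding this integral by a constant times $\abs{x-y}$.

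For the triangular and Epanechnikov kernels I would use that each is Lipschitz, with an explicit Lipschitz constant $L_K$ depending on the bandwidth, and supported in some interval $[-R,R]$. Fix $x_0\in\R$, set $U = (x_0-1,x_0+1)$ and $I = [x_0-R-2,\,x_0+R+2]$. For $x,y\in U$ the integrand above vanishes for $t\notin I$, so only the values of $h$ on the compact set $I$ enter. Since $h$ is locally essentially bounded, $I$ is covered by finitely many open sets on each of which $h$ is essentially bounded; let $M_I$ be the maximum of the corresponding bounds. Then, using $\abs{K(u)-K(v)}\le L_K\abs{u-v}$,
\begin{equation*}
 \abs{(h*K)(x)-(h*K)(y)} \;\le\; M_I \int_I \abs{K(x-t)-K(y-t)}\,dt \;\le\; M_I\,\abs{I}\,L_K\,\abs{x-y},
\end{equation*}
so $h*K$ is Lipschitz on $U$ with constant $M_I\abs{I}L_K$. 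As $x_0$ was arbitrary, $h*K$ is locally Lipschitz.

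For the Gaussian, $K$ is $C^1$ but not compactly supported, so the set $\set{t : K(x-t)\ne 0}$ has infinite measure and the previous estimate is useless; instead I would route the bound through $\norm{K'}_1<\infty$. Writing $K(x-t)-K(y-t) = \int_{y}^{x} K'(u-t)\,du$ and applying Tonelli's theorem,
\begin{equation*}
 \int_\R \abs{K(x-t)-K(y-t)}\,dt \;\le\; \int_{\min(x,y)}^{\max(x,y)} \Bigl(\int_\R \abs{K'(u-t)}\,dt\Bigr) du \;=\; \norm{K'}_1\,\abs{x-y} ,
\end{equation*}
by translation invariance of Lebesgue measure. Hence $\abs{(h*K)(x)-(h*K)(y)} \le \norm{h}_\infty\,\norm{K'}_1\,\abs{x-y}$, so $h*K$ is globally Lipschitz with constant $\norm{h}_\infty\norm{K'}_1$. (Equivalently, one can differentiate under the integral to get $(h*K)' = h*K'$ and bound it pointwise, but the estimate above avoids the justification needed when $K$ has corners.)

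The routine parts are checking that $\norm{K'}_1<\infty$ and recording the Lipschitz constants for the three explicit kernels, together with the measure-theoretic bookkeeping (integrability of the integrands, Tonelli, and the finite subcover producing $M_I$). The one genuine point of care is the Gaussian case: because its support is unbounded, the naive estimate fails, one must pass to the $\norm{K'}_1$ bound, and one must strengthen the hypothesis on $h$ from locally essentially bounded to essentially bounded. That is the main obstacle, and it is also precisely why the two halves of the theorem cannot be combined into one statement.
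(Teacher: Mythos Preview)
Your argument is correct and follows essentially the same route as the paper: for the compactly supported kernels you reproduce the paper's Theorem~\ref{thm:stability2} (Lipschitz $K$ with bounded support, $h$ bounded on the relevant ball), and for the Gaussian you reproduce Theorem~\ref{thm:stability3} (bound $\int\abs{K(s+t)-K(s)}\,ds$ by $D\abs{t}$, then use $\norm{h}_\infty$). The one genuine difference is how that last integral bound is obtained: the paper computes it explicitly for the Gaussian, getting $f(t)=\frac{4}{\sqrt{2\pi}}\int_0^{\abs{t}/2\alpha}e^{-x^2/2}\,dx\le \frac{2}{\alpha\sqrt{2\pi}}\abs{t}$, whereas you derive it from $\norm{K'}_1$ via the fundamental theorem of calculus and Tonelli. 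Your route is a bit more general (it applies to any $K$ with $K'\in L^1$, not just the Gaussian) and yields the same constant, since $\norm{K_\alpha'}_1=\frac{2}{\alpha\sqrt{2\pi}}$.

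One small point to tidy: you wrote everything over $\R$, but the paper's kernels live on $\R^d$. The compactly supported case needs no change beyond replacing intervals by balls; for the Gaussian, replace $\int_y^x K'(u-t)\,du$ by the line integral $\int_0^1 \nabla K\bigl(y+s(x-y)-t\bigr)\cdot(x-y)\,ds$ and conclude with $\norm{\,\abs{\nabla K}\,}_1\,\abs{x-y}$ in place of $\norm{K'}_1\,\abs{x-y}$.
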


In practice, this can be translated to
{a simple procedure for stabilizing unstable persistent homology computations: perturb the input by adding, for example, Gaussian noise, and redo the computation; repeat and average}. See Algorithm~\ref{alg:main}. By the law of large numbers, the result converges to the desired stable value.

\begin{theorem}
  Let $\vect{\epsilon}_1,\ldots,\vect{\epsilon}_M$ be drawn independently from a kernel $K$. Then
  \begin{equation*} 
    \frac{1}{M} \sum_{i=1}^M h(\vect{a} - \vect{\epsilon}_i) \to (h*K)(\vect{a}).
  \end{equation*}
\end{theorem}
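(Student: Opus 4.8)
The plan is to recognize the right-hand side as an expectation and then invoke the strong law of large numbers. A kernel $K$ is by definition a nonnegative measurable function with $\int_{\R^n} K = 1$, so ``drawing $\bm{\epsilon}_i$ from $K$'' means that each $\bm{\epsilon}_i$ is a random vector on $\R^n$ with probability density $K$. By the definition of convolution together with the change of variables ${\bf x}\mapsto {\bf a}-{\bf x}$, for each $i$ we have
\[
  (h*K)({\bf a}) \;=\; \int_{\R^n} h({\bf a}-{\bf x})\,K({\bf x})\,d{\bf x} \;=\; \mathbb{E}\!\left[h({\bf a}-\bm{\epsilon}_i)\right].
\]
The first step of the write-up is to observe that this integral is finite under the hypotheses of Theorems~\ref{thm:stability1}, \ref{thm:stability2} and \ref{thm:stability3}: if $h$ is essentially bounded the bound is immediate, while if $h$ is only locally essentially bounded and $K$ has compact support, then $h({\bf a}-\bm{\epsilon}_i)$ is essentially bounded on the support of $K$, so again $\mathbb{E}|h({\bf a}-\bm{\epsilon}_i)|<\infty$. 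These are exactly the conditions under which $h*K$ is well defined in the first place.

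Next I would set $Y_i := h({\bf a}-\bm{\epsilon}_i)$. Since $h$ is measurable and the $\bm{\epsilon}_i$ are independent and identically distributed, the $Y_i$ are i.i.d.\ real-valued random variables, and by the previous paragraph they are integrable with common mean $\mathbb{E}[Y_1]=(h*K)({\bf a})$. Kolmogorov's strong law of large numbers then yields $\frac{1}{M}\sum_{i=1}^M Y_i \to \mathbb{E}[Y_1]$ almost surely as $M\to\infty$, which is precisely
\[
  \frac{1}{M}\sum_{i=1}^M h({\bf a}-\bm{\epsilon}_i) \;\longrightarrow\; (h*K)({\bf a}).
\]

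There is no real obstacle in this argument; the only point demanding care is the integrability of the $Y_i$, namely $\mathbb{E}\,|h({\bf a}-\bm{\epsilon}_1)|<\infty$, and this is supplied automatically by the same hypotheses on $h$ and $K$ that give meaning to the convolution and to the statement itself. I would also remark that the identical argument gives $L^1$ convergence, and that when $h({\bf a}-\bm{\epsilon}_1)$ additionally has finite variance the central limit theorem provides the usual $O(M^{-1/2})$ rate, which is what justifies the ``perturb, recompute, and average'' procedure in practice.
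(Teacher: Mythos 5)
Your proof is correct and takes essentially the same route as the paper's: identify $(h*K)(\mathbf{a})$ as the expectation $\mathbb{E}[h(\mathbf{a}-\bm{\epsilon}_1)]$ of an i.i.d.\ sample and invoke the law of large numbers (the paper notes both the weak and strong versions). Your explicit check that $h(\mathbf{a}-\bm{\epsilon}_1)$ is integrable under the standing hypotheses is a detail the paper leaves implicit, but it does not change the argument.
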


\begin{algorithm}
  \caption{Stabilizing unstable persistence computations}
  \label{alg:main}
  \begin{algorithmic}
    \Require $h:\R^n \to \R$, $a \in \R^n$
    \Ensure $M \in \N$, $\sigma > 0$
    \For {$i \leftarrow 1, M$}
      \For {$j \leftarrow 1, n$}
        \State Sample $\epsilon_j$ from $N(0,\sigma^2)$
      \EndFor
      \State $y_i \leftarrow h(a + \epsilon)$, $\epsilon = (\epsilon_1,\ldots,\epsilon_n)$
    \EndFor
    \State \textbf{return} the average value of $y_1,\ldots,y_M$  
  \end{algorithmic}
\end{algorithm}

We summarize our computational pipeline in the following algorithm.
Say we have performed a persistence computation and obtained an unstable output. For example, we have determined that the longest interval in the degree one bar code of the Vietoris-Rips complex on points $X_1,\ldots,X_N \in \R^d$ is born with the addition of the edge $X_1X_2$.
We encode this output as a function $h:\R^n \to \R$ with input $a \in \R^n$.
For example, the coordinates of the above points give us $a \in \R^n$ where $n=Nd$. We define a function $h:\R^n \to \R$ whose value is the length of the longest interval in the bar code if it is born with the addition of the edge $X_1X_2$ and is otherwise $0$. 
 
The choice of standard deviation $\sigma$ (also called bandwidth) is discussed in Sections \ref{sec:bandwidth-theory} and \ref{sec:bandwidth-practice}. In Section~\ref{sec:stability-kernel}, we prove that Algorithm~\ref{alg:main} is stable with respect to this choice.

\subsection{Three examples}

For the reader familiar with persistent homology who wants to see how this works in practice, we provide three examples, 
the first and third to synthetic data and 
the second to brain imaging data.
Code for these examples is available at 
\url{https://github.com/peter-bubenik/stabilizing-paper-code}.

\begin{example}
\label{ex:double-annulus}
\emph{A cycle generating a persistent homology class}

\begin{figure}
  \centering
  \includegraphics[width=\textwidth]{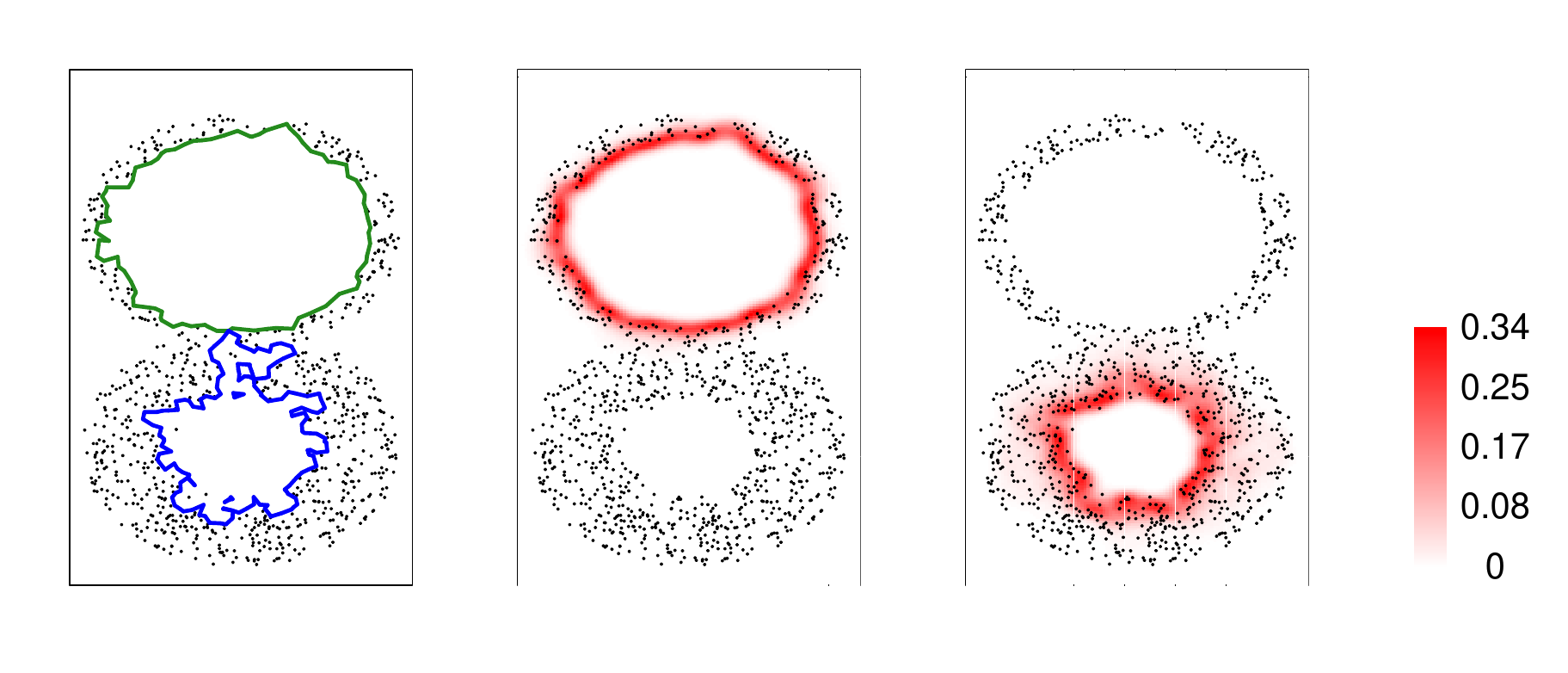}
  \caption{Finding the first and second longest bars. The first panel shows the original sample and the representative cycles produced by Dionysus~\cite{dionysus} for the first and second longest bars. The latter two panels show the proportion of perturbations for which each square in a grid intersected the representative geometric cycle of the first or second longest bar, respectively.}
  \label{fig:doubleannulus}
\end{figure}

We sample 1000 points uniformly from two conjoined annuli of inner and outer radii $(20, 50)$ and $(40, 50)$. 
Using Dionysus~\cite{dionysus}, we compute the $1$-dimensional persistent homology of the alpha complex of our sample and obtain a representative cycle for the longest and second-longest bars.
See Figure~\ref{fig:doubleannulus}, left panel.
However, the embedded location of these cycles is unstable. 
We would like to quantify the uncertainty of this location. 
To do so, we consider a square grid with edge-length $1$.
Our function $h: \R^{2000} \to \R$ has input the coordinates of the sampled points and has output $1$ if the geometric cycle produced by Dionysus intersects a given square in our grid and otherwise has output $0$.

We perturb the sampled points 10,000 times by adding Gaussian noise with standard deviation $3$.
For each square, we find the proportion of trials in which the representative geometric cycle for the longest or second-longest bar produced by Dionysus intersects the square. 
By performing this procedure simultaneously for every square in the grid, we obtain the second and third panels in Figure~\ref{fig:doubleannulus}. 

To see the effect of varying the choice of bandwidth, see Section~\ref{sec:bandwidth-practice}.
\end{example}

\begin{example}
\label{ex:brain-artery}
\emph{Location of a persistent homology generator in brain imaging data}

In~\cite{Bendich2015trees}, the authors apply topological data analysis to brain arteries extracted from magnetic resonance images.
Mathematically, each of these brain arteries is a graph embedded in three-dimensional Euclidean space.
Using the height (the $z$-coordinate) one obtains a filtration on this graph, which may be used to compute degree-zero persistent homology.

To facilitate statistical analysis of the resulting persistence diagrams (i.e. bar codes), they convert each persistence diagram to a vector consisting of the lengths of the 100 longest bars in decreasing order.
In their analysis, the length of the $28$th longest bar is a numerical feature that yields a correlation with age that is near-optimal among vector features consisting of the lengths of the $i$th through $j$th longest bars for any $1 \leq i \leq j \leq 100$.

If one wants to find a biological interpretation of this result, it is obvious to ask for the location of the generator of the $28$th longest bar for each subject.
 It is easy to locate the generator responsible for the birth of the $28$th longest bar. It will be a particular vertex of the graph, whose image is a point in space.
However, the location of this point is unstable: as we will later explain, small perturbations of the spatial coordinates of the vertices of the graph can lead to large changes of this location.

\begin{figure}
  \centering
  \includegraphics[width=0.35\textwidth]{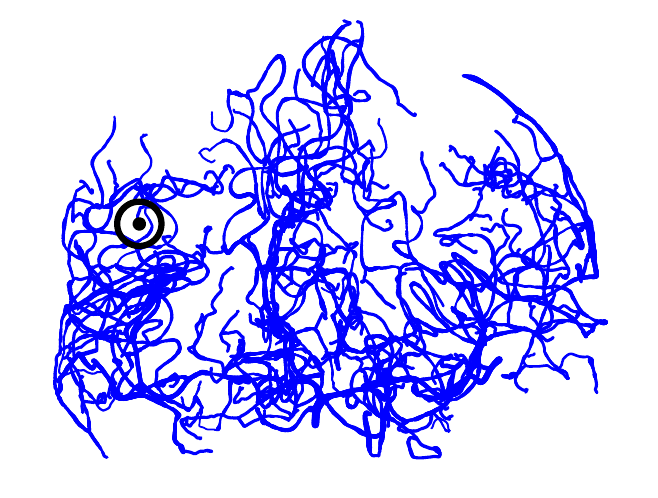}
  \caption{The brain arteries of the first subject in~\cite{Bendich2015trees}. The black dot is the location of the generator of the $28$th longest bar in degree-zero persistent homology. We consider the indicator function on the location of this generator with respect to the given sphere.}
  \label{fig:brain_artery}
\end{figure}

We choose a ball centered at this point and consider the function whose value is $1$ if the location of the generator of the $28$th longest bar is located in this ball, and is otherwise $0$.
The resulting function $h:\R^{3V} \to \R$ (where $V$ is the number of vertices in the graph) is unstable, but it may be stabilized using the method summarized in Section~\ref{sec:contribution}.
Applying Algorithm~\ref{alg:main} with $M=1000$ and $\sigma = 0.1$ we obtain an estimate of the stable value of $h*K$ evaluated at the observed input, equal to $0.637$. This shows that under small perturbations of the input, over half of the time the generator of the $28$th longest bar is located in the chosen ball.
This result holds for a large range of sizes of balls - see Section~\ref{sec:brain-location} for some further discussion.

We remark that this approach provides a resolution of the conflict between TDA theorists and TDA users expressed in the Fundamental Conundrum of Topological Data Analysis in the introduction. We can provide TDA users with a location of a generator of a persistent homology class together with an estimate of a stable real value of how often this location lies in a given region under certain perturbations.
\end{example}

\begin{example}
\label{ex:torus}
\emph{Persistence of a homology class born in a region}

Consider the function $f$ on the square in Figure~\ref{truefunction}.
This induces a function $\bar{f}$ on the torus since $f(x,y)=0$ on the boundary of the square. Suppose we are only given a finite sample of this induced function and we are interested in the presence of long-lived bars which are born in the region of the torus corresponding to the second quadrant of the square.

\begin{figure}
  \centering\includegraphics[width=45mm]{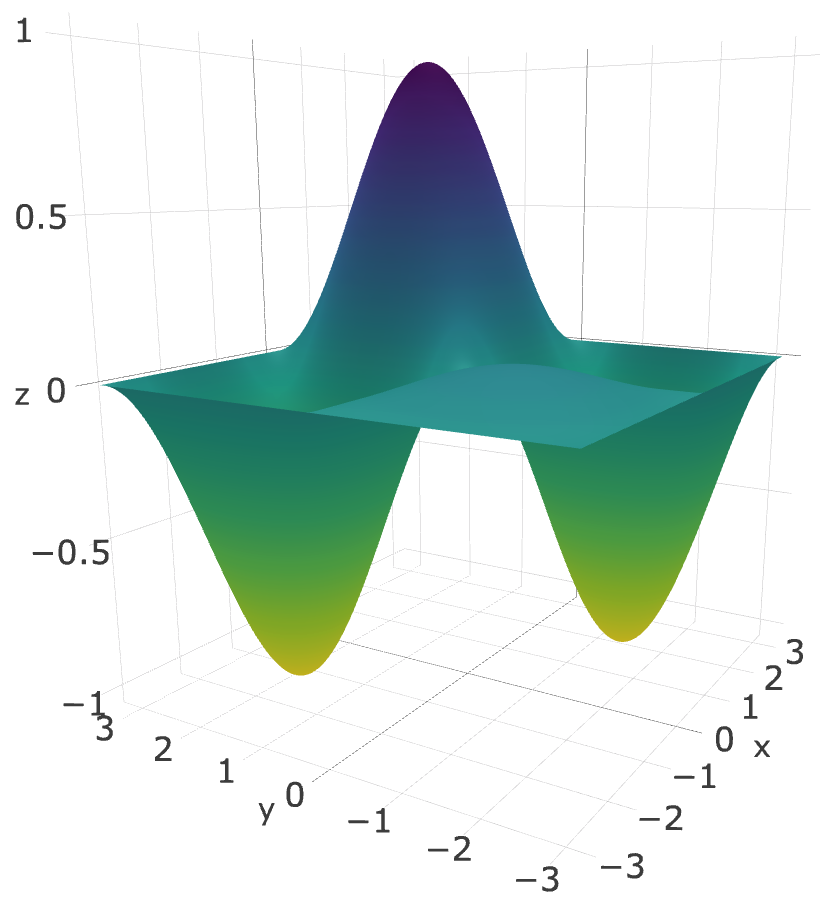}
  \caption{The graph of the function on the square $[-\pi,\pi]^2$
    given by
    $f(u,v) = \sin(u)\sin(v)(1-0.9^*1_{\{u<0,v<0\}}):[-\pi,\pi]^2\to
    \mathbb{R}$. It induces a function on the torus, $\bar{f}:T^2 \to \R$, with two global minima with value $-1$, one global maximum with value $1$, one local maximum with value $0.1$, and four saddle points with value $0$. From}
  \label{truefunction}
\end{figure}

To be concrete, we start with a sample $X$ of $N$ points from the graph of $\bar{f}$, by sampling $u_i,v_i$ independently from the uniform distribution on $[-\pi,\pi]$ and letting $z_i = f(u_i,v_i)$.
Note that $X$ is a random variable.
We use $X$ to construct a filtered simplicial complex approximating the unknown function $\bar{f}$ as follows. 
From the points $\{(u_i,v_i)\}$ we construct a Delaunay triangulation of the torus. We filter this simplicial complex by assigning the vertex $(u_i,v_i)$ the value $z_i$ and assigning edges and triangles the maximum value of their vertices.

\begin{figure}
  \includegraphics[width=0.33\textwidth]{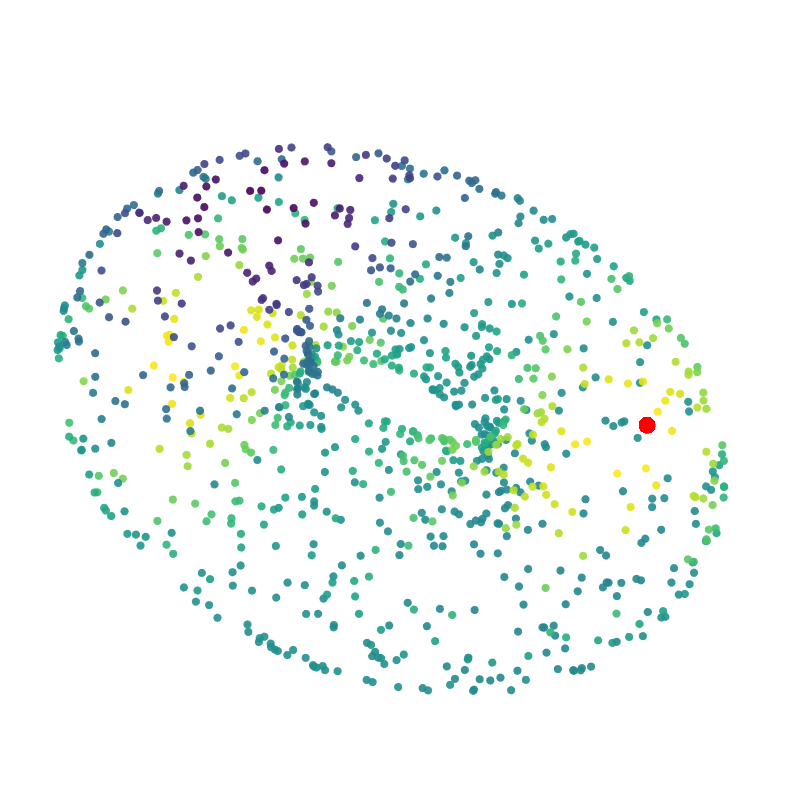}
  \includegraphics[width=0.33\textwidth]{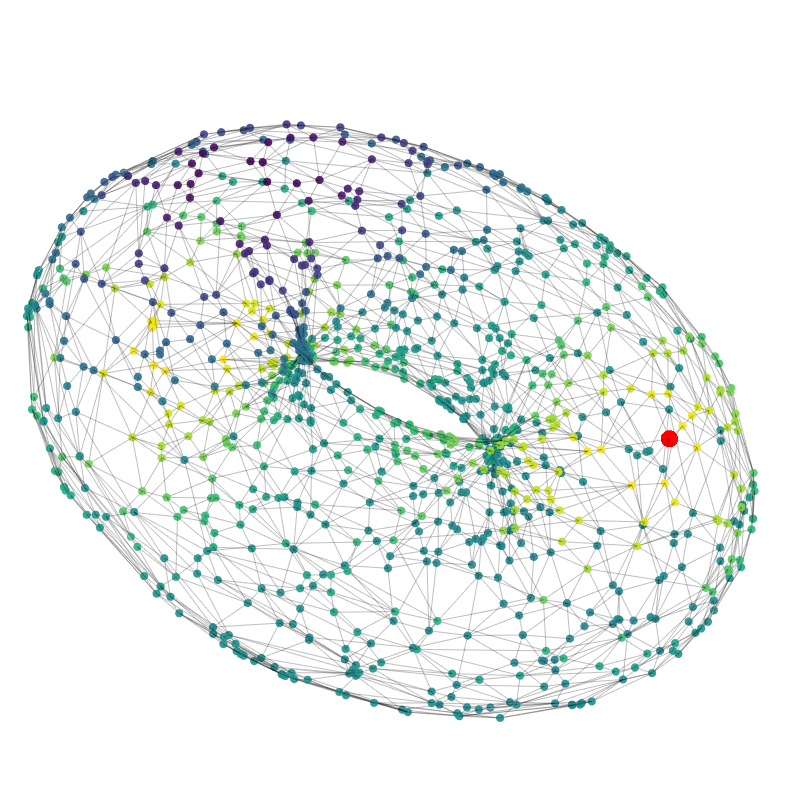}
  \includegraphics[width=0.33\textwidth]{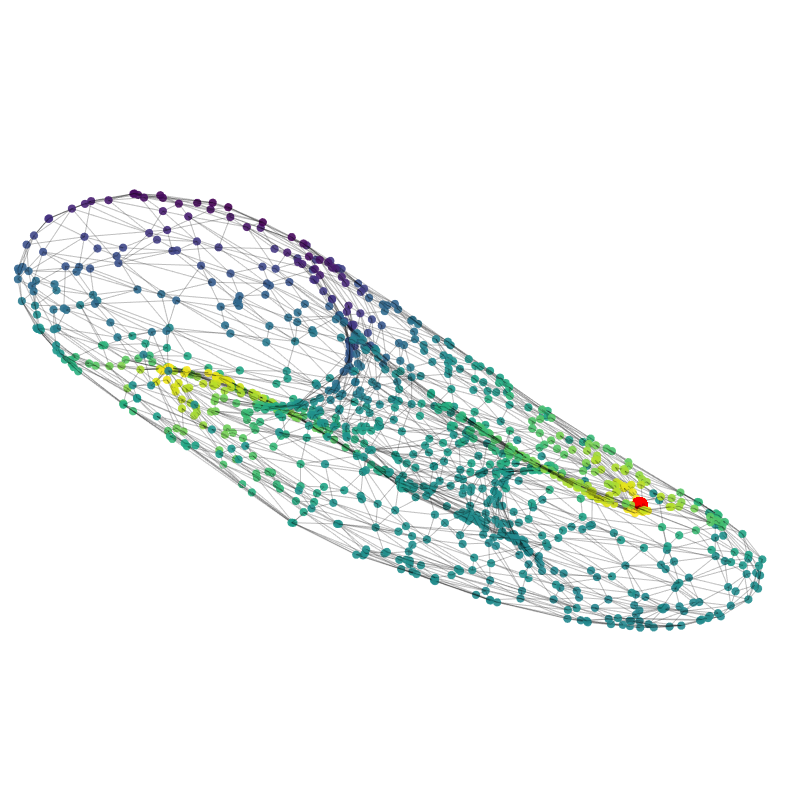}
  \caption{Sample of 1000 points from the graph
    $\{(x,f(x)) : x\in T^2\}$, where the function values are indicated
    using the same color scale as in Figure~\ref{truefunction}. The
    points on the torus are used to construct a Delaunay
    triangulation, which is filtered using the function values. On the
    right we indicate the filtration values by moving the points in
    the normal direction.}
  \label{fig:sample}
\end{figure}

We compute the $0$-dimensional extended\footnote{Extended persistent homology follows the homology of increasing sublevel sets with the relative homology of the whole space relative to decreasing superlevel sets~\cite{cseh:extendingP}. In the case considered here, it pairs the global minimum with the global maximum.} persistence diagram of this filtered simplicial complex.
Let $h(X)$ be the length of the longest bar if that bar was born in the region corresponding to the second quadrant (see Figure~\ref{truefunction}) and $0$ otherwise.

This process defines a function $h:\mathbb{R}^{3N}\to\mathbb{R}$, but $h$ is unstable. Consider the sample $X=x$ in Figure~\ref{fig:sample}.
We have $h(x)=0$ since the global minimum, highlighted in red, is born outside the region corresponding to the second quadrant. Because of the symmetry of $f$, the random variable $h(X)$ is $0$ approximately half the time and about $2$ approximately half the time. 

Let $K$ denote the $3N$-variate Gaussian with mean $0$ and standard deviation $0.2$.
For $M \geq 1$, sample $\epsilon_1,\ldots,\epsilon_M$ independently from $K$.
Compute $\frac{1}{M} \sum_{i=1}^M h(x-\epsilon_i)$. 
See Figure~\ref{fig:long-bars}.
As $M$ increases, this quantity converges to $g(x)$, where $g := h * K$ is the stabilized version of $h$.

\begin{figure}
  \centering\includegraphics[width=60mm]{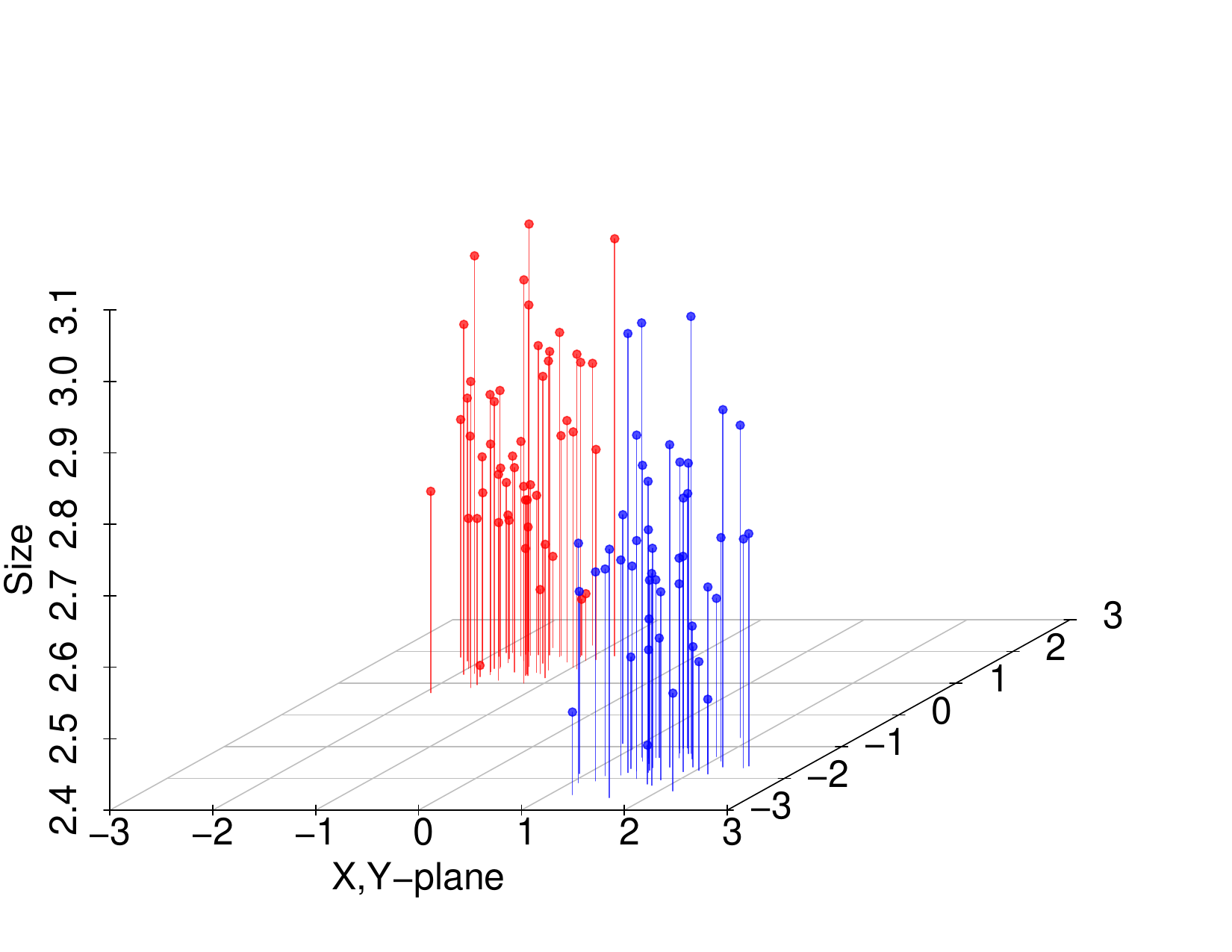}
  \caption{Locations and sizes of 100 longest bars from the trials. Averaging the lengths of the red bars over 1000 trials we get $1.291$, which is consistent with the fact that the random variable $h(X)$ is $0$ or about $2$ with equal probability. 
We should not expect $\lim_{M\to\infty}\frac{1}{M}\sum_{i=1}^{M}h(x+\epsilon_i)$ to converge to $1$ because unlike $f$, a particular sample $X=x$ is not symmetric with respect to the second and fourth quadrants.}
  \label{fig:long-bars}
\end{figure} 

\end{example}

\subsection{Related Work}

Partial inspiration for the main idea of our work (when faced with an instability caused by a near-interchange of values, perturb the values many times and take some sort of average) comes from the trembling-hand equilibrium solution \cite{munch2015probabilistic} to the non-uniqueness problem for Fr\'echet means of persistence diagrams. 
Our approach should also be compared with the topological reconstruction results of Niyogi, Smale and Weinberger~\cite{nsw2}.

Several recent papers have advocated principled approaches for extracting features from persistence diagrams, including persistence landscapes \cite{Bubenik2015landscapes}, the stable multi-scale kernel~\cite{Reininghaus2015kernel}, intensity functionals \cite{Chen2015intensity}, persistence images~\cite{Adams:2017}, the stable topological signature~\cite{Carriere2015signatures}, and the cover-tree entropy reduction~\cite{Smith:2017}. Our result complements these ideas: once one identifies some specific parts of the persistence diagram as having good classification power, one can then attempt to locate, in a robust way, the portions of the domain responsible for these parts.
Other papers (e.g. \cite{Chazal2011measure,bckl:nonparametric,adams:2011}) have developed sophisticated schemes for data-cleaning before persistent homology computation.
These techniques are generally fragile to certain initial parameter choices, such as the $m_0$ parameter in \cite{Chazal2011measure}. Again, we provide a complementary role: any of these schemes can be run many times for several perturbations of an initial parameter choice, and the output can then be taken with confidence.

Dey and Wenger~\cite{DeyWenger:2007} have shown that the critical points of interval persistent homology are stable in the sense that they remain within some path-connected component.

Zomorodian and Carlsson~\cite{Zomorodian2008localizing} use Mayer-Vietoris as inspiration in their technique for localizing (relative to a cover) homology classes within a given simplicial complex. However, this works only for a fixed simplicial complex, not a simplicial complex endowed with a filtration, and the results are certainly fragile to changes in this fixed complex.

Weinberger~\cite{Weinberger:2014} considers the sample complexity of some basic problems of topological inference. Specifically, he estimates the number of sample points necessary to determine the dimension, topological type, and to detect singularities for certain spaces.

Robust summaries of persistent homology are considered in the following papers; they do not consider the location of homology generators.
In~\cite{Blumberg:2012}, Blumberg et al. show that persistent homology on a metric measure space induces a stable empirical measure in the space of persistence diagrams. Taking the distance to a reference distribution or a reference barcode, they obtain robust statistics.
In~\cite{Chazal:2014d}, the authors derive limiting distributions and confidence sets for persistence diagrams based on the sub-level sets of the distance-to-a-measure. 

Convolving with a kernel to obtain smoothness is a classical idea in statistics~\cite{silverman:book,wandJones:book}. It has been used to construct smooth estimators of discrete data as an initial step to computing persistent homology~\cite{bckl:nonparametric,Bubenik2015landscapes,Fasy:2014}.
A related idea is the to perform subsampling (e.g. the bootstrap) to obtain convergence results and confidence intervals for persistence diagrams and persistence landscapes~\cite{Fasy:2014,cflrw:silhouettes,Chazal:2014c,Chazal:2014a}.
These papers use ideas related to ones presented here, but to smooth initial data or to smooth stable outputs of persistence computations, not to stabilize unstable outputs of persistence computations.

\subsection{Computational issues}

Our methods (applying Theorem~\ref{thm:lln} in Algorithm~\ref{alg:main})
require repeated computation of persistence diagrams for similar filtrations.
The computational cost may be considerable.
In Examples \ref{ex:brain-artery} and \ref{ex:torus} we repeat $M=1000$ times.
In Example~\ref{ex:double-annulus} we repeat $10,000$ times.
Note that we do not have convergence results at this time.
For repeated persistent homology calculations it is important to have efficient software. 
In Example~\ref{ex:double-annulus} we use Dionysus~\cite{dionysus},
in Examples~\ref{ex:brain-artery} and~\ref{ex:component} we calculate persistent homology using a union-find data structure~\cite{edelsbrunnerHarer:book}, in Example~\ref{ex:torus} we use Perseus~\cite{perseus}, and in Example~\ref{ex:edge} we use Ripser~\cite{ripser}.
  
However, our methods are trivially parallelizable. With access to many cores, our repeated computations can be computed in parallel without increasing the running time.

Note that for small perturbations, much of the persistent homology computation may be the same. In this case, there may be considerable computational savings by 
using vineyard updates~\cite{csem:vineyards}.

Let us also remark that our methods combine nicely with subsampling, which is crucial for allowing persistent homology computations in the big data setting~\cite{Chazal:2014c}.

\subsection{Replacing persistence diagrams with features}

Our approach centers on converting a persistence diagram to a real number. This may seem simplistic and somewhat ad hoc. However, all effective methods of combining persistence diagrams with serious statistical analysis and machine learning techniques rely on replacing a persistence diagram with a vector in some Hilbert space or Banach space. For simplicity, we restrict ourselves to the vector space $\R$, but our approach can be extended to more general vector spaces.

\subsection{Outline}

Persistent homology computations and stability theorems are reviewed in Section \ref{sec:PD}, although we assume the reader is already somewhat familiar with them. 
Several examples of important but unstable persistence-based information are given in Section \ref{sec:IG}, and we then describe a general approach that stabilizes them in Section \ref{sec:main}. 
In Section~\ref{sec:ph-computations}, we show how to apply these results to various persistent homology computations.
Additional analyses and discussion are presented in Section \ref{sec:examples}.
Potential future directions are discussed in Section \ref{sec:Disc}.

\section{Persistent Homology and Stability} \label{sec:PD}

The treatment of persistence diagrams here is adapted from \cite{Edelsbrunner2010}. For a more general discussion, see \cite{oudot:book}. We assume the reader is familiar with the basics of homology groups: the textbook \cite{Munkres2} is a good introduction. All homology groups are assumed to be computed over some fixed field.
For concreteness, we restrict our attention to simplicial complexes, but our results also apply to more general complexes.

\subsection{Persistent Homology}


Persistent homology is computed for a finite \emph{filtered abstract simplicial complex}. That is, we have a finite \emph{abstract simplicial complex}, a collection,  $K=\{\sigma\}$, of nonempty subsets of a fixed finite set that satisfy the condition that if $\emptyset \neq \tau \subseteq \sigma \in K$ then $\tau \in K$. In addition, we have a \emph{filtration}, a function $f:K \to \R$ such that $\tau \subseteq \sigma$ then $f(\tau) \leq f(\sigma)$. That is, $f$ is order preserving.


Fix a homological dimension $p$.
Suppose the distinct values of $f$ are $r_1 < \ldots < r_m.$ 
For each $1 \leq i \leq m,$ define $K^i = \{\sigma \in K \mid f(\sigma) \leq r_i\}$.
Since $f$ is order preserving, each $K^i$ is a subcomplex. Whenever $i \leq j$, there is an inclusion $K^i \hookrightarrow K^j$, which
induces a homomorphism:
\[
f_p^{i,j}: H_p(K^i) \to H_p(K^j).
\]
A homology class $\alpha \in H_p(K^i)$ is a \emph{persistent homology class} that is \emph{born} at level $i$ if $\alpha \notin \im f_p^{i-1,i}$, and that \emph{dies} entering level $j$
if $f_p^{i,j}(\alpha) =0$ but $f_p^{i,j-1}(\alpha) \neq 0$.
If $\alpha$ never dies, we say that it dies entering level $j=\infty$ and $r_{\infty} = \infty$.
The \emph{persistence} of $\alpha$ is defined to be $\pers(\alpha) = r_j - r_i.$ The set of classes which are born at $i$ and die entering level $j$ form a vector space, with rank denoted $\mu_p^{i,j}.$
The degree-$p$ \emph{persistence diagram} of $f$, $\Dgm_p(f),$ encodes these ranks. It is a multiset of points in the extended plane, with a point
of multiplicity $\mu_p^{i,j}$ at each point $(r_i,r_j).$ 


In practice, one constructs a filtered abstract simplicial complex from some other starting data. In addition, more information can be extracted from the persistent homology algorithm than just the persistence diagram.

We define a \emph{persistent homology computation}, $\Comp$, to be a function whose input consists of real numbers $a_1,a_2,\ldots,a_n$. These may include input values and also parameter values for the computation.
Using this input, $\Comp$ constructs an abstract simplicial complex $K$ together with a filtration $f$.
The output of $\Comp$ consists of a degree-$p$ persistence diagram together with for each $(r_i,r_j)$ in the persistence diagram (counted with multiplicity), 
a $p$-simplex $\sigma$ with $f(\sigma)=r_i$, 
a $p$-cycle $\alpha$ in $K^i$ containing $\sigma$,
a $(p+1)$-simplex $\tau$ with $f(\tau) = r_j$,
and a $(p+1)$-chain $\beta$ in $K^j$ containing $\tau$ with $d\beta = \alpha$.


\begin{example}
\label{ex:HF}
\emph{Functions on simplicial complexes.}
A filtered abstract simplicial complex, $K$, may be obtained from a real-valued function, $F$, on the vertices in a finite simplicial complex, $\mathcal{K}$. As a set $K \isom \mathcal{K}$. A filtration, $f$, on $K$ is defined by $f(\sigma) = \sup_{x \in \sigma} F(x)$.

\begin{figure}
  \centering
  \includegraphics[scale=0.2]{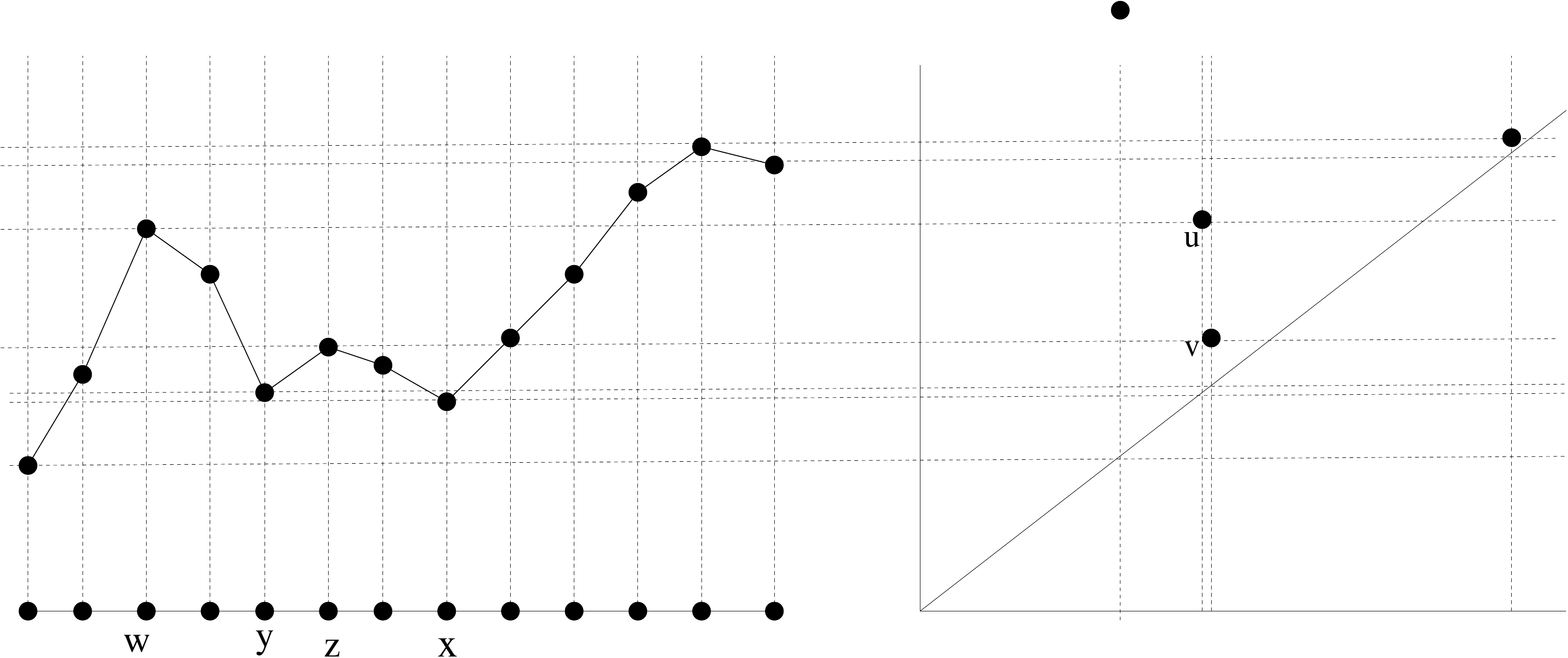}
   \caption{Left: The graph of a function $F$ on a simplicial complex $\mathcal{K}$. Right: the degree-zero persistence diagram $\Dgm_0(f)$ for the corresponding abstract simplicial complex $K$ and filtration $f$. The labeled points
have coordinates $u = (f(x),f(w))$ and $v = (f(y),f(z)).$ The point on the very top has infinite $y$-coordinate.}
\label{fig:ZeroDiag}
\end{figure}

For example, let $\mathcal{K}$ be the geometric line graph (i.e. an embedding of a graph - consisting of vertices and edges - in the plane), shown on the bottom of the left side of Figure \ref{fig:ZeroDiag}.
Above this, we have the graph of a function $F$ on the points in $\mathcal{K}$.
From this, we have a corresponding abstract simplicial complex $K$ and filtration $f$.
The persistence diagram $\Dgm_0(f)$ is on the right. 
The input to $\Comp$ consists of the function values (from left to right) $a_1,a_2,\ldots,a_n$.
\end{example}

\begin{example}
 \emph{Distance to a PL-Curve.}
\label{ex:FDC}
Consider the piecewise-linear curve $C$ on the left side of Figure \ref{fig:EC}.
Moving clockwise, we order its vertices $A = v_1, v_2, \ldots v_N = D$.
Let $K$ be the full simplex on these $N$ vertices. For each vertex $v$, define $f(v) = 0$.
For each edge of the form $e = (v_i, v_{i+1}),$ define $f(e) = 0$, and for any other edge $e = (v_i, v_j),$
we set $f(e)$ to be the Euclidean distance between $v_i$ and $v_j$. Finally, for any higher simplex $\sigma$, set $f(\sigma) = max_{e \subseteq \sigma} f(e)$, where we take the maximum over the set of edges contained in $\sigma$. The degree-one persistence diagram $\Dgm_1(f)$ appears on the right
of Figure \ref{fig:EC}. 
\begin{figure}
  \centering
  \includegraphics[scale=0.3]{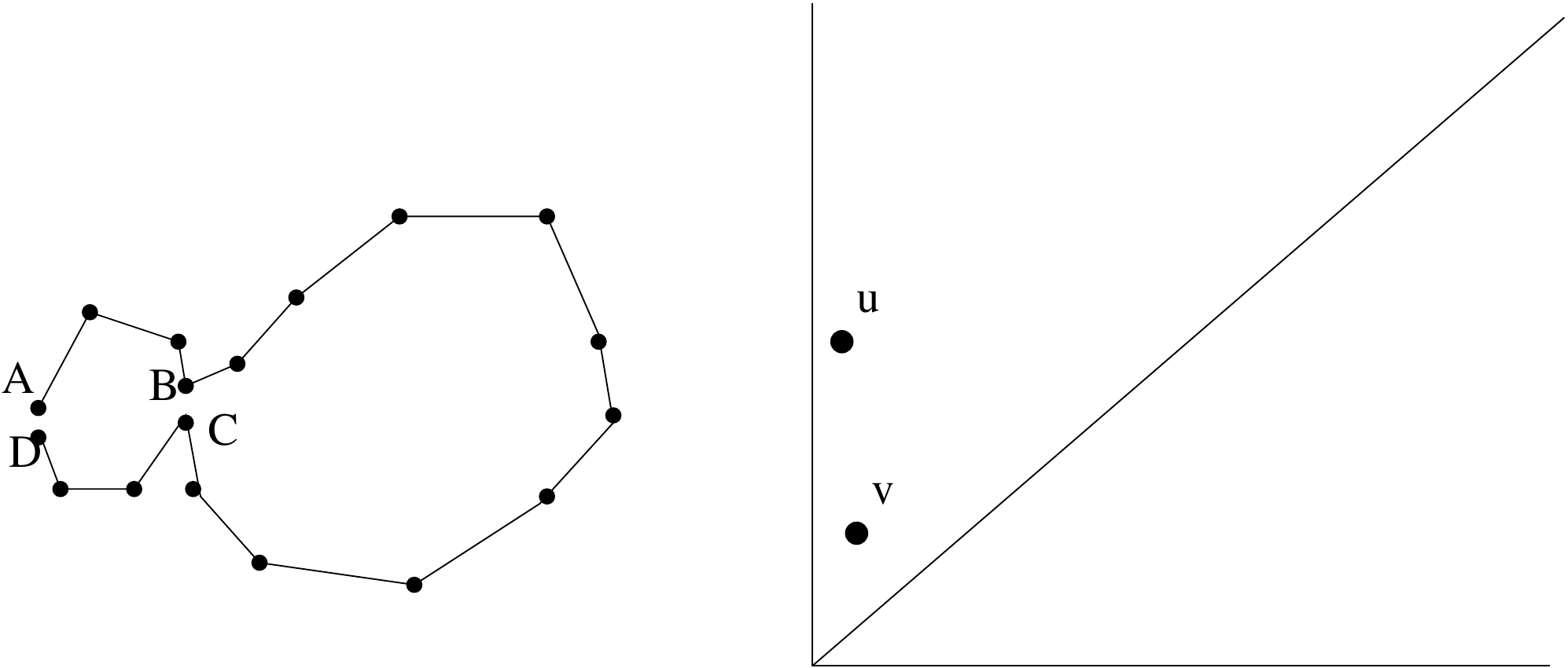}
   \caption{Left: a piecewise-linear curve in the plane. The distance between $A$ and $D$ is slightly smaller than
the distance between $B$ and $C$. Right: $\Dgm_1(f),$ where $f$ is as defined in the text. The points $u$ and $v$ correspond to one-cycles that are
created by the additions of edges $(A,D)$ and $(B,C)$, respectively.}
\label{fig:EC}
\end{figure}
\end{example}
Here the input to $\Comp$ consists of the $2n$ coordinates of the vertices.
We note this paradigm can be extended to curves $C$ in higher-dimensional ambient spaces, or even to higher-dimensional complexes.

%

\begin{example}
\label{ex:RF}
 \emph{Point cloud -- Vietoris-Rips.}
Suppose that $X = \{x_1,\ldots,x_N\}$ is a set of points in some metric space $(Y,d)$. 
We let $K$ be the full simplex on these vertices. Define $f(v) = 0$ for each vertex and $f(e) = d(v,w)$ for each edge $e = (v,w).$
As above, we set $f(\sigma) = max_{e \subseteq \sigma} f(e)$ for all higher-dimensional simplices.
This is called the Vietoris-Rips filtration.
We denote $\Dgm_p(X) = \Dgm_p(f)$. 
The input to $\Comp$ consists of the coordinates of the points in $X$ in some parametrization of $Y$. 
Alternatively, it consists of the entries of the distance matrix $D = (d(x_i,x_j))$.

For example, let $X$ be the annular point cloud on the top-left of Figure \ref{fig:FuzzyOutliers}. The corresponding $\Dgm_1(X)$ appears on the top-right of the same figure.
\end{example}

\begin{example}
  \label{ex:geometry}
\emph{Point cloud -- geometry.}
Often, the simplicial complex in the previous example is too large to work with. 
Instead one applies some geometric ideas to construct a smaller filtered simplicial complex. 
Examples include witness complexes~\cite{deSilvaCarlsson:witness}, the graph-induced complex~\cite{Dey:2013}, and the use of nudged elastic bands~\cite{adams:2011}.
These constructions typically include one or more parameters, which we append to the input to $\Comp$.
\end{example}

\begin{example}
  \label{ex:statistics}
\emph{Point cloud -- statistics.}
Instead of using geometric ideas to construct a smaller point cloud we can use statistical ideas. 
For example, one can use a kernel to smooth the point cloud to obtain a density estimator on the underlying space $Y$ and use this to filter a triangulation of $Y$~\cite{cbk:ipmi2009,bckl:nonparametric,Chen:2015b}. 
Or one may use the local density to threshold the point cloud~\cite{cidsz:mumford}; we consider this in more detail in Example~\ref{ex:DTC}.
Again, these constructions include one or more parameters, which we append to the input for $\Comp$.
\end{example}

\begin{example}
  \label{ex:regression}
\emph{Regression.}
Here we present a variant of Example~\ref{ex:HF} in which we are not given the simplicial complex. 
Instead we sample points $X = (x_1,\ldots,x_N)$, $x_i \in \R^d$ from some probability distribution on $\R^d$. We also sample corresponding perturbed function values $y_i \in \R$. For example, we may have $y_i = f(x_i) + \epsilon_i$, where $\epsilon_i$ is sampled from a univariate Gaussian.
We use $X$ to construct a Delaunay triangulation $K$.
We then use $Y = (y_1,\ldots,y_N)$ to filter $K$ as follows: $f(\sigma) = \max_{x_i\in\sigma} y_i$. This is called the lower star filtration.

Instead of the sample points lying in $\R^d$, they may lie on some compact Riemannian manifold.
See the torus example in Section~\ref{sec:intro}.

Instead of filtering $K$ directly using $Y$, one can instead use $(X,Y)$ to construct an estimator $\hat{f}$ of the unknown regression function $f$. We can then use $\hat{f}$ to filter $K$~\cite{bckl:nonparametric}.
\end{example}


\subsection{Stability}

The persistence diagram $\Dgm_p(f)$ is a summary of the function $f$, and it turns out to be a stable one. The discussion here is adapted from \cite{CohenSteiner2007}.
For a broader description, see \cite{Chazal2009b,cdsgo:book}.

For convenience, to each persistence diagram, we add every point $(r,r)$ on the major diagonal, each with infinite multiplicity.

Now suppose that $\phi: D \to D'$ is some bijection between two persistence diagrams; bijections exist because of the infinite-multiplicity points along the diagonal. The cost of $\phi$ is defined to be $C(\phi) = \sup_{u \in D} ||u - \phi(u)||_{\infty};$ that is, the largest box-norm  distance between matched points. The \emph{bottleneck distance} $W_{\infty}(D,D')$ is defined to be the minimum cost amongst all such bijections.
For example, if $D$ and $D'$ are the black and red diagrams, respectively, on the right side of Figure \ref{fig:NoisyZeroDiag}, then the best bijection would pair $u$ with $u'$, $v$ with $v'$, the two infinite-persistence points with each other, and the other two points with the closest diagonal points. The bottleneck distance is the cost of this bijection.
\begin{figure}
  \centering
  \includegraphics[scale=0.2]{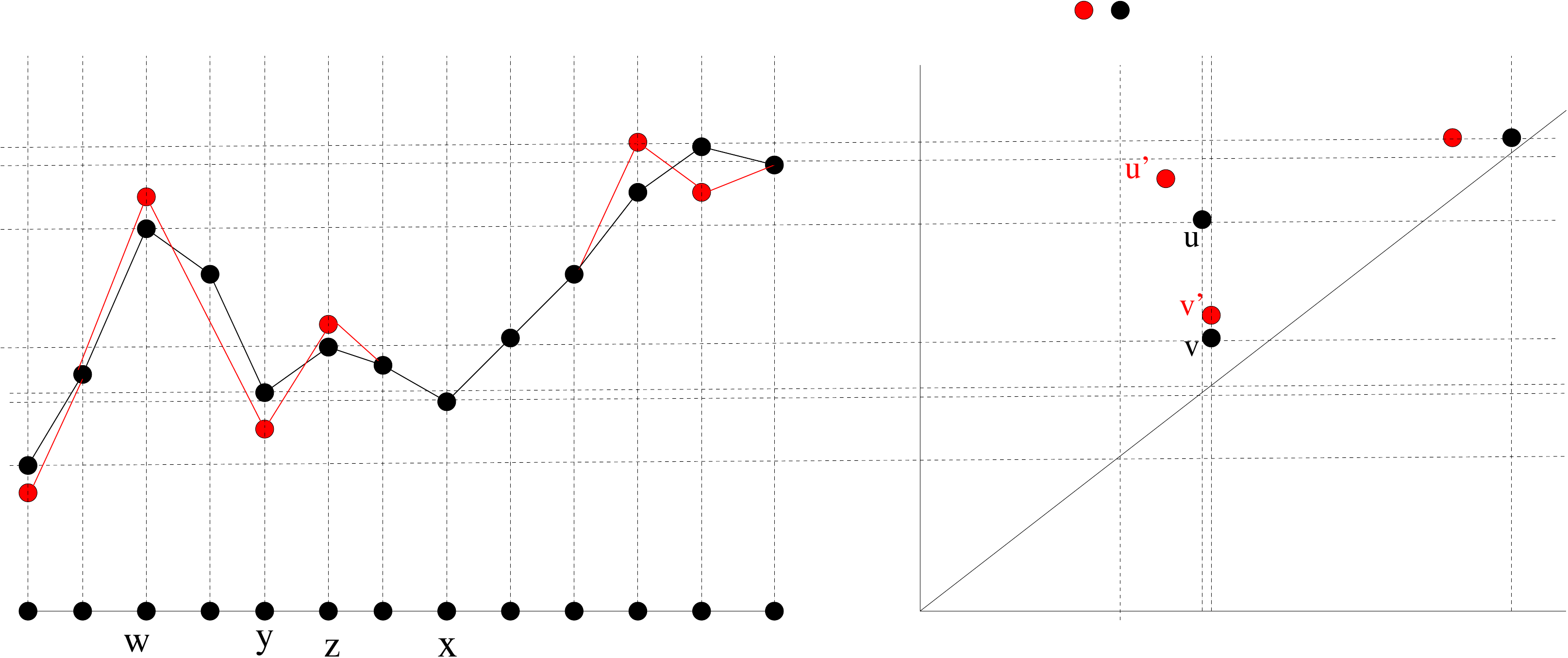}
   \caption{Left: The graphs of functions $f$ (black) and $g$ (red), both on the same domain $\mathcal{K}$.   Right: the persistence diagrams $\Dgm_0(f)$ and $\Dgm_0(g)$, using the same color scheme.
}
\label{fig:NoisyZeroDiag}
\end{figure}
The Diagram Stability Theorem~\cite{CohenSteiner2007} guarantees that persistence diagrams of nearby functions are close to one another. More precisely,
we have $W_{\infty}(D_p(f), D_p(g)) \leq ||f - g||_{|\infty}.$
This is illustrated by Figure \ref{fig:NoisyZeroDiag}.

Note that the difference between $f$ and $g$ is measured in the $L_{\infty}$ norm. In the point cloud context (Ex. \ref{ex:RF}), this translates into requiring that the two point cloud inputs be Hausdorff-close. However, the persistence diagram is not stable with respect to the addition of outliers.
We discuss this problem in more detail in Section \ref{subsec:PC} and propose a  solution in Section \ref{sec:main}.

\section{Instability}
\label{sec:IG}

The Diagram Stability Theorem tells us that the persistence diagram obtained in the output of a persistent homology computation is stable with respect to certain perturbations of the input used to construct a filtered abstract simplicial complex.
However, other outputs of persistent homology computations are not stable. This includes the simplices and cycles that generate persistent homology classes. These are of great interest to practitioners hoping to interpret persistence calculations more directly. 
In addition, many persistence computations rely on choices of parameters and the resulting persistence diagrams may be unstable with respect to these choices.

\subsection{Instability of Generating Cycles/Simplices}
\label{subsec:GCS}

Persistence diagrams are useful and robust measures of the \emph{size} of topological features. What they are less good at, on the other hand, is robustly pinpointing the \emph{location} of important topological features. 
We use Figure \ref{fig:NoisyZeroDiag} to illustrate this problem. 
Suppose that we have the fixed domain $K$ and we observe the function $f$. One of the most prominent points in $\Dgm_0(f)$ is $u$, which corresponds to the pair of values
$f(x)$ and $f(w).$ We might thus be tempted to say that $f$ has an important feature, a component of high-persistence, \emph{at} $x$.  But consider the nearby function $g$ instead. Its diagram $\Dgm_0(g)$ has a point $u'$ that is very close to $u$, but this point corresponds to the pair of values $f(y)$ and $f(w)$. 
There is still a component born at $g(x)$, but it corresponds to the much smaller persistence point $v'$.
And so while the persistence of the point $u$ is a stable summary of the function $f$, the actual location $x$ of the topological feature it corresponds to is not.

This is unfortunate. Several recent works (\cite{Bendich2015tracking}, \cite{Bendich2015trees}, among others) have shown that the presence of points in certain regions of the persistence diagram has strong correlation with covariates under study. For example, each diagram in the second cited work came from a filtration of the brain artery tree in a specific patient's brain, and it was found that the density of points in a certain middle-persistence range gave strong correlations with patient age. It would of course be tempting to hold specific locations in the brain responsible for these points with high distinguishing power.

Unsurprisingly, this problem remains for persistent homology in higher degrees. Consider Figure \ref{fig:EC} again. It is easy to see that edge $(A,D)$ creates the large loop
which corresponds to point $u \in \Dgm_1(f)$. However, a slight perturbation of the vertex configuration could render $(B,C)$ responsible for this loop instead, and so we cannot robustly locate the persistence of this loop \emph{at} $(A,D)$. 

In Section \ref{sec:main}, we both rigorously define this non-robustness and give a method for addressing it.

\subsection{Outliers and Instability of Parameter Choices}
\label{subsec:PC}

The Diagram Stability Theorem guarantees the persistence diagrams associated to two Hausdorff-close point clouds will themselves be close. However,
it says nothing about the outlier problem. For example, consider again the point cloud $X$ (Figure \ref{fig:FuzzyOutliers}, top-left) from Example \ref{ex:RF} to which we apply the Vietoris-Rips construction. 
Its persistence diagram $\Dgm_1(X)$ (top-right of same figure) has one high-persistence point, which corresponds to the ``circle'' that we qualitatively see when 
looking at the points. 
On the other hand, consider the point cloud $X'$ on the bottom-left, which consists of $X$ and three ``outlier'' points spread across the interior of the circle.
The diagram $\Dgm_1(X')$ (bottom-right) is not close to $\Dgm_1(X)$: there is still one point of fairly high persistence, but it's much closer to the diagonal than before.

In practice, this problem is often addressed by first de-noising the point cloud in some way. For example, Carlsson et. al. \cite{Carlsson2008Klein} first thresholded
by density before computing Vietoris-Rips filtrations when they discovered a Klein bottle in the space of natural images. 
There are no guarantees that a different, nearby choice of density threshold parameter would not give a qualitatively different persistence diagram.
Section \ref{sec:main} addresses this by introducing a general method for handling parameter choice in persistence computations.
\begin{figure}[ht]
\centering\includegraphics[width=.8\textwidth]{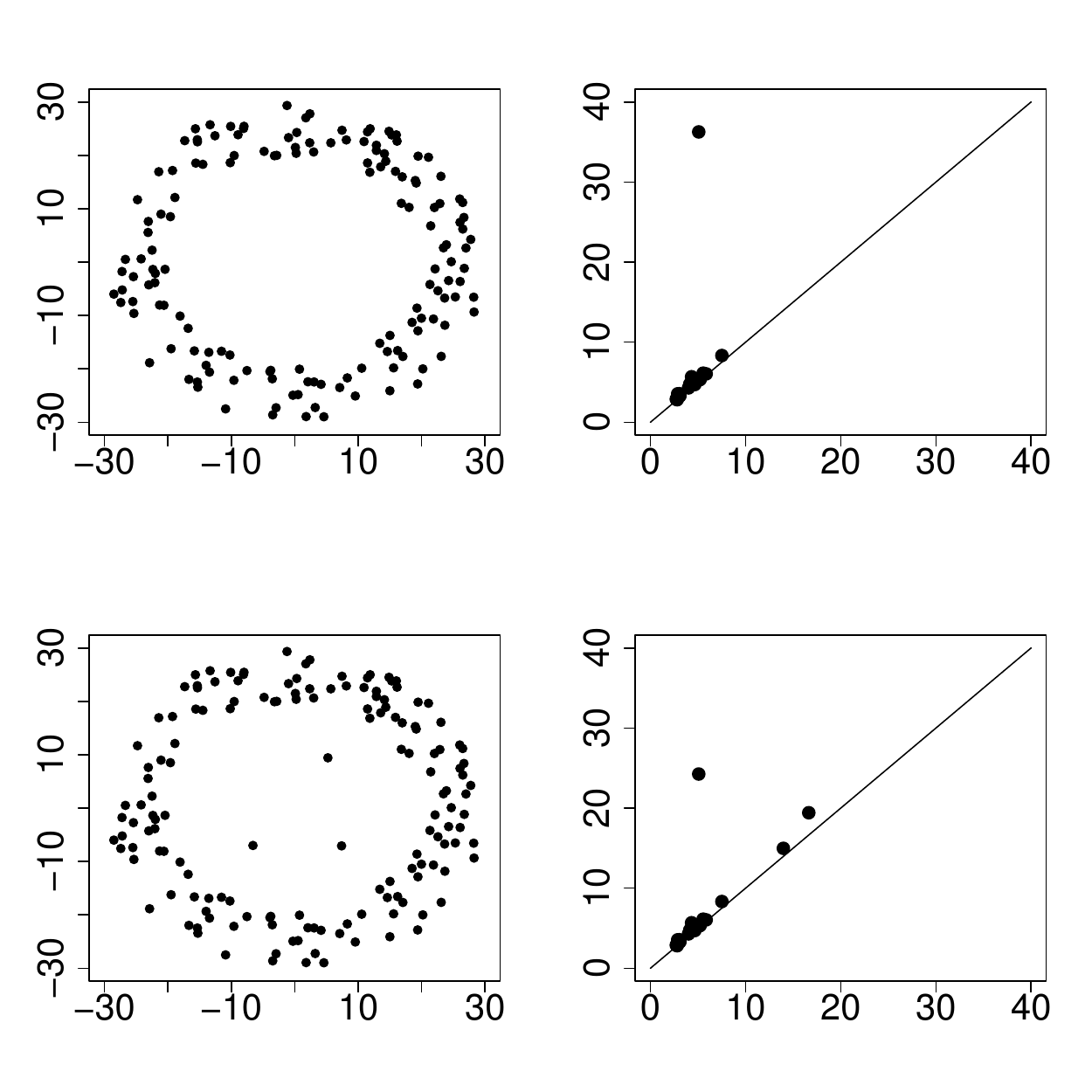}
\caption{Illustration of the outlier problem for the persistent homology of the Vietoris-Rips complex of a point cloud. Top left: $150$ points $X$, sampled from an annulus.
Top right: $\Dgm_1(X).$ Bottom left: $153$ points $X'$, which is $X$ plus three outlier points. Bottom right: $\Dgm_1(X').$
}
\label{fig:FuzzyOutliers}
\end{figure}

\section{Theory: Stability from convolutions} \label{sec:main}

In this section we show how functions may be stabilized by convolving them with a kernel. In Section~\ref{sec:ph-computations}, we will apply these results to the function $h:\R^n \to \R$ discussed in the introduction. First, we give three general results with various assumptions on the function and the kernel. Next, we apply them in three particular cases: the simple triangular kernel and the commonly used Epanechnikov and Gaussian kernels. 
We then outline a few specific examples, some of which will be explored via experiment in the next section.

\subsection{Lipschitz functions and convolution}

Let us start by recalling a few definitions.
For $C \geq 0$, a function $f:\R^n \to \R$ is said to be \emph{$C$-Lipschitz} if for all $u,v \in \R^n$, $\abs{f(u)-f(v)} \leq C \abs{u-v}$, where $\abs{x}$ denotes the Euclidean norm.
We will call a function \emph{Lipschitz} if it is $C$-Lipschitz for some $C \geq 0$.
The support of $f$, denoted $\supp(f)$, is the closure of the subset of $\R^n$ where $f$ is non-zero.

Let $h,g:\R^n \to \R$ be (Lebesgue) measurable functions that are defined almost everywhere. 
The \emph{1-norm} of $h$, is given by $\norm{h}_1 = \int_{\R^n}\abs{h(t)}dt$, if it exists.
The \emph{essential supremum} of $h$, denoted by $\norm{h}_{\infty}$, is the smallest number $a$ such that the set $\{x \st \abs{f(x)} > a\}$ has measure $0$.
If it exists,
the \emph{convolution product} of $h$ and $g$, is given by
\begin{equation*}
  (h*g)(t) = \int_{\R^n} h(s)g(t-s)ds = \int_{\R^n} h(t-s) g(s) ds.
\end{equation*}
It exists everywhere, for example, if 
one function is essentially bounded and the other is integrable;
or if 
one function is bounded and compactly supported and the other is locally integrable~\cite[Section 473D]{Fremlin2000}.

  Throughout this section we assume that $h:\R^d \to \R$ is defined almost everywhere, $K:\R^d \to \R$ and that that the convolution product $h * K$ exists almost everywhere.

\subsection{Stability theorems}
\label{sec:stability}

We now give several conditions on a pair of functions which imply that their convolution product is (locally) Lipschitz. 

The first result appears in \cite[473D(d)]{Fremlin2000}, but the proof is included here for completeness.

\begin{theorem} \label{thm:stability1}
  If $\norm{h}_1 = a$ and $K$ is $b$-Lipschitz, then $h*K$ is $ab$-Lipschitz.
\end{theorem}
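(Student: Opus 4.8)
The plan is to estimate the difference $(h*K)(u) - (h*K)(v)$ directly from the integral definition of the convolution, using the second form $(h*K)(t) = \int_{\R^d} h(t-s)K(s)\,ds$ — wait, actually the cleaner choice is to write the convolution so that the Lipschitz factor $K$ carries the translation. So I would write
\[
  (h*K)(u) = \int_{\R^d} h(s) K(u-s)\,ds, \qquad (h*K)(v) = \int_{\R^d} h(s) K(v-s)\,ds,
\]
and subtract under a single integral sign:
\[
  (h*K)(u) - (h*K)(v) = \int_{\R^d} h(s)\bigl(K(u-s) - K(v-s)\bigr)\,ds.
\]

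Next I would take absolute values and pull them inside the integral (triangle inequality for integrals), obtaining
\[
  \abs{(h*K)(u) - (h*K)(v)} \leq \int_{\R^d} \abs{h(s)}\,\abs{K(u-s) - K(v-s)}\,ds.
\]
Now I apply the hypothesis that $K$ is $D$-Lipschitz: since $(u-s)-(v-s) = u-v$, we have $\abs{K(u-s)-K(v-s)} \leq D\abs{u-v}$ pointwise in $s$. Substituting this bound and factoring the constant $D\abs{u-v}$ (which does not depend on $s$) out of the integral leaves $\int_{\R^d}\abs{h(s)}\,ds = \norm{h}_1 = C$. Hence $\abs{(h*K)(u)-(h*K)(v)} \leq CD\abs{u-v}$, which is exactly the claim. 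I would also note in passing that the same estimate, applied with $v$ fixed and $u \to v$, shows $h*K$ is (essentially everywhere equal to) a genuinely everywhere-defined continuous function, but since the statement only asserts the Lipschitz bound this is a remark rather than a necessary step.

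The only real subtlety — and the step I would be most careful about — is justifying the manipulations with the integral: that $s \mapsto h(s)K(u-s)$ is integrable for each fixed $u$ (so the two convolution values are finite and the subtraction is legitimate), and that the intermediate bound $\abs{h(s)}\cdot D\abs{u-v}$ is itself integrable (which it is, precisely because $\norm{h}_1 = C < \infty$). The standing assumption that $h*K$ exists almost everywhere, together with $\norm h_1 < \infty$, handles the first point; and one should observe that a $D$-Lipschitz function grows at most linearly, so $K$ is locally bounded, which combined with $h \in L^1$ gives local integrability of the product and lets one conclude the convolution is in fact defined \emph{everywhere}, not merely almost everywhere — so the Lipschitz inequality holds for all $u,v$. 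This is genuinely routine measure theory, so I would state it briefly and move on.
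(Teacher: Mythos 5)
Your proposal is correct and follows essentially the same route as the paper's proof: subtract the two convolution integrals, apply the triangle inequality, use the $D$-Lipschitz bound on $K(u-s)-K(v-s)$, and factor out $D\abs{u-v}$ against $\norm{h}_1 = C$. The additional remarks on integrability are sound but go beyond what the paper records.
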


\begin{proof}
  Let $g = h*K$.
  First we have, $g(u) - g(v) = \int_{\R^n} h(s) \left( K(u-s) - K(v-s) \right) ds.$ Then,
  $\abs{g(u) - g(v)} \leq \int_{\R^n} \abs{h(s)} \abs{ K(u-s) - K(v-s) } ds \leq \int_{\R^n} \abs{h(s)} b \abs{u-v} ds \leq ab \abs{u-v}.$
\end{proof}

Let $B_{\alpha}(x)$ denote the closed ball of radius $\alpha$ centered at $x \in \R^d$, and 
let $V_d$ denote the volume of the $d$-dimensional ball of radius $1$.

\begin{theorem} \label{thm:stability2}
  Let $x \in \R^d$ and let $\alpha>0$. If $\norm{h}_{\infty} \leq M$ on $B_{2\alpha}(x)$, 
$K$ is $b$-Lipschitz and
$\supp(K) \subseteq B_{\alpha}(0)$,
then $h*K$ is $2Mb\alpha^dV_d$-Lipschitz in $B_{\alpha}(x)$.
\end{theorem}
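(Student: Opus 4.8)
The plan is to reduce Theorem~\ref{thm:stability2} to a localized version of the estimate in Theorem~\ref{thm:stability1}. Fix $u, v \in B_{\alpha}(x)$, set $g = h*K$, and as in the proof of Theorem~\ref{thm:stability1} write
\[
  g(u) - g(v) = \int_{\R^d} h(s)\bigl( K(u-s) - K(v-s) \bigr)\, ds.
\]
The key observation is that the integrand vanishes unless $s$ lies in the region where at least one of $K(u-s)$ or $K(v-s)$ is nonzero, i.e. unless $u - s \in \supp(K)$ or $v - s \in \supp(K)$. Since $\supp(K) \subseteq B_{\alpha}(0)$, this forces $s \in B_{\alpha}(u) \cup B_{\alpha}(v)$. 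Because $u, v \in B_{\alpha}(x)$, we have $B_{\alpha}(u) \cup B_{\alpha}(v) \subseteq B_{2\alpha}(x)$, so the domain of integration is effectively contained in $B_{2\alpha}(x)$, which is exactly the set on which the hypothesis $\norm{h}_{\infty} \le M$ holds.

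Next I would bound the integrand pointwise on this restricted domain: $\abs{h(s)} \le M$ for almost every $s \in B_{2\alpha}(x)$ by hypothesis, and $\abs{K(u-s) - K(v-s)} \le D\abs{u-v}$ since $K$ is $D$-Lipschitz. Combining,
\[
  \abs{g(u) - g(v)} \le \int_{B_{2\alpha}(x)} M \cdot D\abs{u-v}\, ds = M D \abs{u-v} \cdot \mathrm{vol}\bigl(B_{2\alpha}(x)\bigr) = MD\abs{u-v}\,(2\alpha)^d V_d = 2^d M D \alpha^d V_d \,\abs{u-v}.
\]
This gives a Lipschitz constant of $2^d MD\alpha^d V_d$; to match the stated constant $2MD\alpha^d V_d$ one should instead integrate over the smaller set $B_{\alpha}(u) \cup B_{\alpha}(v)$, whose volume is at most $2 \alpha^d V_d$ (the union of two balls of radius $\alpha$), rather than bounding it by the volume of $B_{2\alpha}(x)$. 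This refinement is what yields the claimed factor of $2$ in place of $2^d$.

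The main obstacle, and the only genuinely delicate point, is the measurability/integrability bookkeeping needed to justify the restriction of the integral to $B_{\alpha}(u) \cup B_{\alpha}(v)$ and the pointwise bound $\abs{h} \le M$ there: the hypothesis is an \emph{essential} supremum bound on $B_{2\alpha}(x)$, so the bound $\abs{h(s)}\le M$ holds only almost everywhere, and one must check that $h$ restricted to $B_{2\alpha}(x)$ is integrable (it is, being essentially bounded on a set of finite measure) so that the convolution integral genuinely splits and the triangle inequality under the integral sign is valid. One should also note that points $u,v \in B_{\alpha}(x)$ at which $g$ is not defined form a null set (since $h*K$ exists a.e. by the standing Assumption), which is harmless for establishing a Lipschitz bound on the set where $g$ is defined; alternatively, the estimate shows $g$ agrees a.e. with a genuinely Lipschitz function on $B_{\alpha}(x)$ and so has a Lipschitz representative there. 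Everything else is the routine chain of inequalities already rehearsed in the proof of Theorem~\ref{thm:stability1}.
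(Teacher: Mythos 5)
Your proposal is correct and follows essentially the same route as the paper: restrict the integral to $B_{\alpha}(u)\cup B_{\alpha}(v)$ using $\supp(K)\subseteq B_{\alpha}(0)$, bound $\abs{h}$ by $M$ there since this set lies in $B_{2\alpha}(x)$, apply the $D$-Lipschitz bound on $K$, and bound the volume of the union of the two balls by $2\alpha^d V_d$. Your refinement from the cruder $2^d$ constant to the stated $2$ is exactly the step the paper takes, and your remarks on the almost-everywhere caveats are a harmless elaboration.
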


\begin{proof}
  Let $g = h*K$. Let $u,v \in B_{\alpha}(x)$.
  As in the previous proof,
  $\abs{g(u) - g(v)} \leq \int_{\R^n} \abs{h(s)} \abs{ K(u-s) - K(v-s) } ds \leq \int_{B_{\alpha}(u) \cup B_{\alpha}(v)} \abs{h(s)} b \abs{u-v} \, dx \leq 2Mb\alpha^dV_d \abs{u-v}$.
\end{proof}

\begin{theorem} \label{thm:stability3}
  If $\norm{h}_{\infty} \leq M$ and $\int \abs{K(s+t) - K(s)} \, ds \leq b\abs{t}$ for all $t \in \R^d$, then $h*K$ is $Mb$-Lipschitz.
\end{theorem}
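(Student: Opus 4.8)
The plan is to mimic the proof of Theorem~\ref{thm:stability1}, but to avoid pulling the Lipschitz constant of $K$ out of the integral pointwise; instead I would keep the quantity $\int_{\R^n} \abs{K(u-s) - K(v-s)}\, ds$ intact and use the hypothesis on $K$ directly to bound it. So first I would write $g = h * K$ and, exactly as before, use the second form of the convolution (or a change of variables) to get
\begin{equation*}
  g(u) - g(v) = \int_{\R^n} h(s)\bigl( K(u-s) - K(v-s) \bigr)\, ds.
\end{equation*}
Taking absolute values inside and using $\norm{h}_\infty \leq M$ gives
\begin{equation*}
  \abs{g(u) - g(v)} \leq M \int_{\R^n} \abs{K(u-s) - K(v-s)}\, ds.
\end{equation*}

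Next I would perform the substitution $s \mapsto u - s$ in the integral, which turns it into $\int_{\R^n} \abs{K(s) - K(s - (u-v))}\, ds$. Setting $t = v - u$ (so that $\abs{t} = \abs{u-v}$), this is precisely $\int_{\R^n} \abs{K(s+t) - K(s)}\, ds$, which by hypothesis is at most $D\abs{t} = D\abs{u-v}$. Combining, $\abs{g(u) - g(v)} \leq MD\abs{u-v}$, so $g$ is $MD$-Lipschitz, as claimed.

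I do not anticipate a serious obstacle here: the result is essentially a repackaging of the ``$L^1$-modulus of continuity'' of $K$ as the working hypothesis, replacing the pointwise Lipschitz bound used in Theorem~\ref{thm:stability1}. The one point that warrants a line of care is the interchange/measurability underpinning the manipulations — but this is covered by the standing Assumption that $h * K$ exists almost everywhere (so that the integrals make sense), together with the dominated/Tonelli-type justification for the change of variables, which is routine since Lebesgue measure is translation-invariant. I would also note, as a remark, that this theorem is the genuinely useful one for the Gaussian and for other kernels with unbounded support, since for such $K$ the constant $D$ in $\int \abs{K(s+t)-K(s)}\,ds \leq D\abs{t}$ can be finite even though Theorems~\ref{thm:stability1} and~\ref{thm:stability2} do not apply (the Gaussian is not itself compactly supported, and bounding $\norm{h}_1$ is often too strong a requirement); this motivates why all three theorems are stated.
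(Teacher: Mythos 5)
Your proposal is correct and follows essentially the same route as the paper's proof: bound $\abs{g(u)-g(v)}$ by $M\int\abs{K(u-s)-K(v-s)}\,ds$ and apply the $L^1$-modulus hypothesis after a translation change of variables. The closing remark about why this version is the right one for the Gaussian is accurate and consistent with how the paper uses the theorem in Corollary~\ref{cor:gaussian}.
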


\begin{proof}
  Let $g = h*K$. 
  Again,
  $\abs{g(u) - g(v)} \leq \int_{\R^n} \abs{h(s)} \abs{ K(u-s) - K(v-s) } ds \leq \int M \abs{K(u-v+x)-K(x)}\,dx \leq Mb\abs{u-v}$.
\end{proof}

\subsection{Application to kernels}
\label{sec:kernels}

We now apply the above theorems to smooth a function $h$, obtaining a Lipschitz function.
That is, we will take $K$ to be a \emph{kernel}, a non-negative integrable real-valued function on $\R^n$ satisfying $\int K(x) dx = 1$, $\int x K(x) dx = 0$ and $\int x^2 K(x) dx < \infty$.
For example, we can choose $K$ to be the \emph{triangular kernel}, $K(x) = c \max(1-\norm{x}, 0)$, for appropriate normalization constant $c$ (see Figure~\ref{fig:kernels}).
The most common choices are the Gaussian kernel and the Epanechnikov kernel, which are described below (see Figure~\ref{fig:kernels}).
Notice that if $K$ is a kernel, then so is $K_{\alpha}(x) = \frac{1}{{\alpha}^n} K(\frac{x}{\alpha})$.%
\footnote{More generally, 
we can choose the bandwidth to be a symmetric positive definite matrix $H$ and let $K_H(x) = \frac{1}{\sqrt{\det{H}}}K(H^{-1/2}x)$.
}
The parameter $\alpha$ is called the \emph{bandwidth} and allows one to control the amount of smoothing.

\begin{figure}
  \centering
  \includegraphics[width = \textwidth]{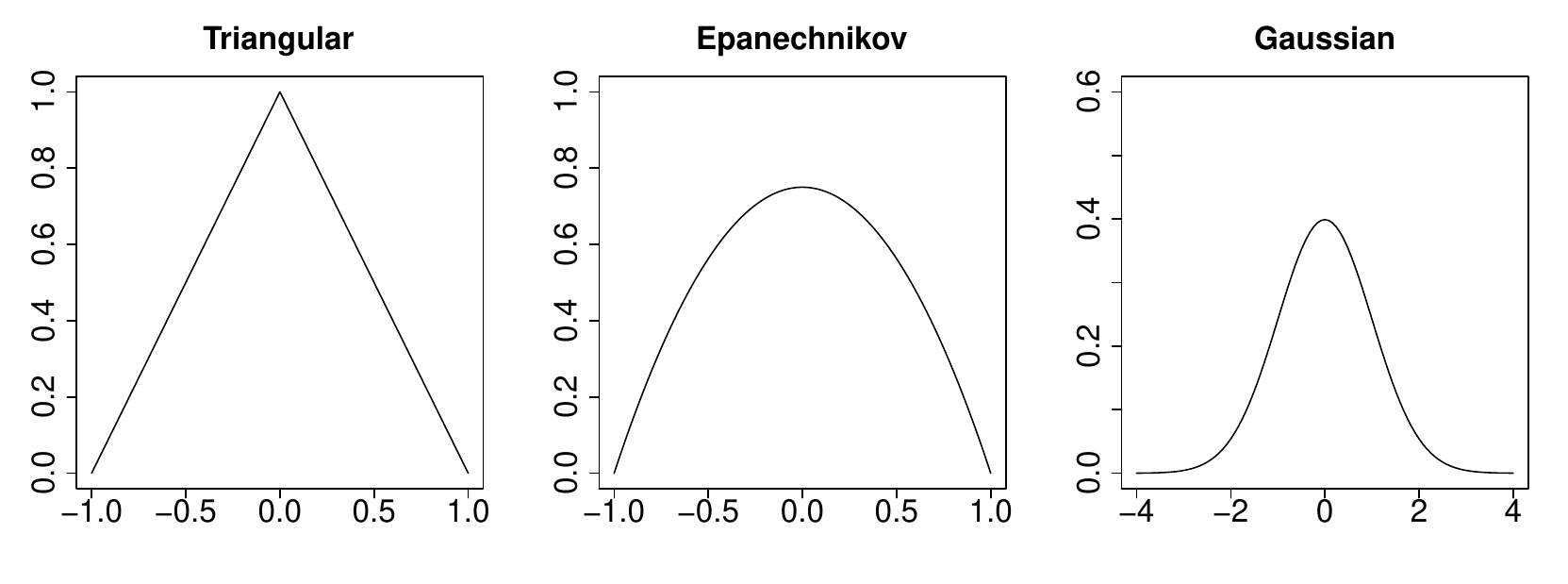}
   \caption{Graphs of three common kernels.}
\label{fig:kernels}
\end{figure}

\subsubsection*{The triangular kernel}
 
Let $\alpha>0$. Let $V_d$ denote the volume of the $n$-dimensional ball of  radius $1$. For $A \subseteq \R^d$, let $I_A$ denote the indicator function on $A$. That is, $I_A(x) = 1$ if $x \in A$ and $0$ otherwise. The \emph{triangular kernel} is given by 
\begin{equation*}
  K_{\alpha}(x) = \frac{d+1}{\alpha^d V_d} \left( 1 - \frac{\abs{x}}{\alpha} \right) I_{B_{\alpha}(0)}.
\end{equation*}
Note that $\supp(K_{\alpha})  = B_{\alpha}(0)$ and $K_{\alpha}$ is $\frac{d+1}{\alpha^{d+1} V_d}$-Lipschitz.
Applying Theorem~\ref{thm:stability2}, we have the following.

\begin{corollary} \label{cor:triangular}
  Let $x \in \R^d$. If $\norm{h}_{\infty}\leq M$ on $B_{2\alpha}(x)$ then $h*K_{\alpha}$ is $\frac{2M(d+1)}{\alpha}$-Lipschitz in $B_{\alpha}(x)$.
\end{corollary}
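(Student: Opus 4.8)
The plan is to derive Corollary~\ref{cor:triangular} as a direct instance of Theorem~\ref{thm:stability2}, simply by plugging in the specific constants that describe the triangular kernel $K_\alpha$. So first I would record the two structural facts about $K_\alpha$ that Theorem~\ref{thm:stability2} needs as hypotheses: that $\supp(K_\alpha) \subseteq B_\alpha(0)$, and that $K_\alpha$ is Lipschitz with a specific constant. For the support, note that $1 - \abs{x}/\alpha \leq 0$ outside $B_\alpha(0)$ and the indicator function $I_{B_\alpha(0)}$ vanishes there anyway, so $\supp(K_\alpha) = B_\alpha(0)$. For the Lipschitz constant, the only nonconstant ingredient of $K_\alpha$ on its support is the radial map $x \mapsto 1 - \abs{x}/\alpha$, which is $\tfrac{1}{\alpha}$-Lipschitz (since $x \mapsto \abs{x}$ is $1$-Lipschitz); multiplying by the prefactor $\tfrac{d+1}{\alpha^d V_d}$ gives that $K_\alpha$ is $\tfrac{d+1}{\alpha^{d+1} V_d}$-Lipschitz. (One should be slightly careful here that the "kink" at $\abs{x} = \alpha$, where $K_\alpha$ drops to $0$, does not spoil the Lipschitz bound — but it does not, since the function only gets closer to $0$ there, which is its value outside, so the global Lipschitz estimate still holds; this is the one spot worth a sentence of justification.)

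Next I would invoke Theorem~\ref{thm:stability2} with $D = \tfrac{d+1}{\alpha^{d+1} V_d}$ and with the same $x$, $\alpha$, and $M$ as in the corollary statement. The hypothesis $\norm{h}_\infty \leq M$ on $B_{2\alpha}(x)$ is exactly what the corollary assumes, and the support and Lipschitz conditions on $K$ have just been verified, so the theorem applies verbatim and yields that $h * K_\alpha$ is $2MD\alpha^d V_d$-Lipschitz on $B_\alpha(x)$.

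Finally I would simplify the resulting constant:
\[
2MD\alpha^d V_d = 2M \cdot \frac{d+1}{\alpha^{d+1}V_d} \cdot \alpha^d V_d = \frac{2M(d+1)}{\alpha},
\]
which is precisely the claimed Lipschitz constant, completing the proof.

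I do not expect any real obstacle here — the corollary is a routine specialization. The only place demanding a modicum of care is confirming the Lipschitz constant of $K_\alpha$ across the boundary $\abs{x}=\alpha$ of its support (i.e., that the indicator-induced discontinuity is benign because the one-sided limit of $\tfrac{d+1}{\alpha^d V_d}(1-\abs{x}/\alpha)$ as $\abs{x}\to\alpha^-$ equals $0$, matching the exterior value), and making sure the dimension bookkeeping is consistent with the ambient $\R^d$ used throughout the section. Everything else is algebra.
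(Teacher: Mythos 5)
Your proposal is correct and is essentially the paper's own argument: the paper likewise records that $\supp(K_{\alpha}) = B_{\alpha}(0)$ and that $K_{\alpha}$ is $\frac{d+1}{\alpha^{d+1}V_d}$-Lipschitz, then cites Theorem~\ref{thm:stability2} and simplifies $2MD\alpha^d V_d$ to $\frac{2M(d+1)}{\alpha}$. Your extra remark about the kink at $\abs{x}=\alpha$ being benign is a sensible (and correct) bit of added care that the paper leaves implicit.
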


Note that it follows that if the bound on $h$ is global then so is the Lipschitz bound.

\subsubsection*{The Epanechnikov kernel}

Let $\alpha>0$.
The \emph{Epanechnikov kernel} is given by 
\begin{equation*}
  K_{\alpha}(x) = \frac{d+2}{2\alpha^d V_d} \left( 1 - \frac{\abs{x}^2}{\alpha^2} \right) I_{B_{\alpha}(0)}.
\end{equation*}

Now $\supp(K_{\alpha}) = B_{\alpha}(0)$ and $K_{\alpha}$ is $\frac{d+2}{\alpha^{d+1}V_d}$-Lipschitz.
Applying Theorem~\ref{thm:stability2}, we have the following.

\begin{corollary} \label{cor:epanechnikov}
  Let $x \in \R^d$. If $\norm{h}_{\infty}\leq M$ on $B_{2\alpha}(x)$ then $h*K_{\alpha}$ is $\frac{2M(d+2)}{\alpha}$-Lipschitz in $B_{\alpha}(x)$.
\end{corollary}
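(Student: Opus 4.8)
The plan is to verify the two hypotheses of Theorem~\ref{thm:stability2} for $K = K_{\alpha}$ the Epanechnikov kernel, and then simply substitute into the constant it provides. The support claim, $\supp(K_{\alpha}) \subseteq B_{\alpha}(0)$, is immediate from the factor $I_{B_{\alpha}(0)}$; in fact, since $1 - \abs{x}^2/\alpha^2$ vanishes on the sphere $\abs{x} = \alpha$, the kernel $K_{\alpha}$ is continuous, and the set where it is nonzero is exactly the open ball, with closure $B_{\alpha}(0)$. The substantive point is to confirm that $K_{\alpha}$ is $D$-Lipschitz with $D = \frac{d+2}{\alpha^{d+1}V_d}$.

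For that, I would compute the gradient of $K_{\alpha}$ on the open ball, where $K_{\alpha}(y) = \frac{d+2}{2\alpha^d V_d}\bigl(1 - \abs{y}^2/\alpha^2\bigr)$, obtaining $\nabla K_{\alpha}(y) = -\frac{d+2}{\alpha^{d+2}V_d}\,y$, whose Euclidean norm is $\frac{(d+2)\abs{y}}{\alpha^{d+2}V_d} \le \frac{d+2}{\alpha^{d+1}V_d} = D$ for $\abs{y}\le\alpha$; off the closed ball the gradient vanishes. To upgrade this pointwise gradient bound to a genuine Lipschitz estimate across the support boundary (where $K_{\alpha}$ is only piecewise smooth), I would restrict to the segment $t \mapsto K_{\alpha}(u + t(v-u))$, $t\in[0,1]$, for arbitrary $u,v\in\R^d$. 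Since $t \mapsto \abs{u+t(v-u)}^2 - \alpha^2$ is a quadratic, it has at most two zeros in $[0,1]$, so the segment is split into at most three subintervals on each of which the composite is either identically $0$ or smooth with derivative bounded by $D\abs{u-v}$; combined with the continuity of $K_{\alpha}$, this makes $t \mapsto K_{\alpha}(u+t(v-u))$ absolutely continuous with a.e.\ derivative bounded by $D\abs{u-v}$, hence $\abs{K_{\alpha}(u)-K_{\alpha}(v)} \le D\abs{u-v}$.

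With both hypotheses in hand, Theorem~\ref{thm:stability2} gives that $h*K_{\alpha}$ is $2MD\alpha^d V_d$-Lipschitz on $B_{\alpha}(x)$ whenever $\norm{h}_{\infty}\le M$ on $B_{2\alpha}(x)$, and $2MD\alpha^d V_d = 2M\cdot\frac{d+2}{\alpha^{d+1}V_d}\cdot\alpha^d V_d = \frac{2M(d+2)}{\alpha}$, which is the claimed bound. This is the same argument as for Corollary~\ref{cor:triangular}, the only change being the recomputation of the kernel's Lipschitz constant. I do not anticipate a real obstacle; the one place demanding a little care is precisely the segment argument for the Lipschitz constant of $K_{\alpha}$, rather than a bare appeal to the supremum of the gradient norm, and the continuity of $K_{\alpha}$ at $\abs{y}=\alpha$ is exactly what keeps that step short.
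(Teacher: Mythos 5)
Your proposal is correct and follows essentially the same route as the paper: the paper simply records that $\supp(K_{\alpha}) = B_{\alpha}(0)$ and that $K_{\alpha}$ is $\frac{d+2}{\alpha^{d+1}V_d}$-Lipschitz, then applies Theorem~\ref{thm:stability2} to get the constant $2MD\alpha^d V_d = \frac{2M(d+2)}{\alpha}$. The only difference is that you supply the (correct) verification of the kernel's Lipschitz constant via the gradient bound and the segment argument across the support boundary, which the paper leaves implicit.
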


\subsubsection*{The Gaussian kernel}

Let $\alpha>0$. The \emph{Gaussian kernel} is given by 
\begin{equation*}
  K_{\alpha}(x) = \frac{1}{\alpha^d(2\pi)^{d/2}} e^{-\abs{x}^2/2\alpha^2}.
\end{equation*}

\begin{lemma}
  For the Gaussian kernel $K_{\alpha}$, 
  let $f(t) = \int \abs{K_{\alpha}(s+t)-K_{\alpha}(s)}\,ds$. Then $f(t) \leq \frac{2}{\alpha\sqrt{2\pi}} \abs{t}$ for all $t \in \R^d$.
\end{lemma}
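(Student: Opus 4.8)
The plan is to reduce the $d$-dimensional $L^1$-modulus-of-continuity estimate to a one-variable computation, using two features of the Gaussian: it is radial, and it factors as a product of one-dimensional Gaussians. Write $K_{\alpha}(x) = \prod_{i=1}^d \varphi_{\alpha}(x_i)$ with $\varphi_{\alpha}(s) = \frac{1}{\alpha\sqrt{2\pi}} e^{-s^2/2\alpha^2}$. Since $K_{\alpha}$ is radial, for any rotation $R$ of $\R^d$ the substitution $s \mapsto R^{-1}s$ (unit Jacobian) gives $\int |K_{\alpha}(s+t) - K_{\alpha}(s)|\,ds = \int |K_{\alpha}(s+Rt) - K_{\alpha}(s)|\,ds$, so $f(t)$ depends only on $\abs{t}$ and we may take $t = \abs{t}\,e_1$.

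With $t = \abs{t}\,e_1$ and $s = (s_1, s')$, $s' \in \R^{d-1}$, only the first factor of the product changes under the translation, so $\abs{K_{\alpha}(s+\abs{t}e_1) - K_{\alpha}(s)} = \abs{\varphi_{\alpha}(s_1 + \abs{t}) - \varphi_{\alpha}(s_1)}\,\prod_{i=2}^d \varphi_{\alpha}(s_i)$, the tail product being nonnegative. By Tonelli's theorem and $\int_{\R}\varphi_{\alpha} = 1$, this collapses to
\begin{equation*}
  f(t) = \int_{\R} \abs{\varphi_{\alpha}(s_1 + \abs{t}) - \varphi_{\alpha}(s_1)}\,ds_1 .
\end{equation*}
So it suffices to prove the one-dimensional statement: $\int_{\R} \abs{\varphi_{\alpha}(s+\tau) - \varphi_{\alpha}(s)}\,ds \leq \frac{2}{\alpha\sqrt{2\pi}}\,\tau$ for all $\tau \geq 0$.

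For this, I would use the fundamental theorem of calculus, $\varphi_{\alpha}(s+\tau) - \varphi_{\alpha}(s) = \int_0^{\tau} \varphi_{\alpha}'(s+r)\,dr$, so that $\abs{\varphi_{\alpha}(s+\tau) - \varphi_{\alpha}(s)} \leq \int_0^{\tau} \abs{\varphi_{\alpha}'(s+r)}\,dr$; integrating in $s$ and swapping the order of integration (Tonelli) yields $\int_{\R}\abs{\varphi_{\alpha}(s+\tau)-\varphi_{\alpha}(s)}\,ds \leq \tau\int_{\R}\abs{\varphi_{\alpha}'(s)}\,ds$. Finally $\varphi_{\alpha}$ is smooth, unimodal with maximum at $0$ and limits $0$ at $\pm\infty$, so $\int_{\R}\abs{\varphi_{\alpha}'} = 2\varphi_{\alpha}(0) = \frac{2}{\alpha\sqrt{2\pi}}$ (equivalently, $\varphi_{\alpha}'(s) = -\frac{s}{\alpha^2}\varphi_{\alpha}(s)$, whence $\int_{\R}\abs{\varphi_{\alpha}'} = \frac{2}{\alpha^2}\int_0^{\infty} s\,\varphi_{\alpha}(s)\,ds = \frac{2}{\alpha\sqrt{2\pi}}$). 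Combining the displayed reductions gives $f(t) \leq \frac{2}{\alpha\sqrt{2\pi}}\abs{t}$.

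I do not expect a real obstacle, but the point worth flagging is why one must bother with the radial/product reduction at all. Applying the FTC bound directly in $\R^d$ gives $f(t) \leq \abs{t}\int_{\R^d}\abs{\nabla K_{\alpha}(s)}\,ds = \abs{t}\,\frac{1}{\alpha^2}\int_{\R^d}\abs{s}\,K_{\alpha}(s)\,ds$, and this last constant is $\frac{1}{\alpha}$ times the mean Euclidean norm of a standard $d$-dimensional Gaussian, which grows like $\sqrt{d}$ and therefore does \emph{not} yield the stated dimension-free constant. Reducing to the change in a single coordinate is precisely what eliminates the dependence on $d$, so that step is the crux of the argument.
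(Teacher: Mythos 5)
Your proof is correct, and the constant comes out exactly as stated. The route differs from the paper's at the one-dimensional step. The paper (leaving the details as an exercise) evaluates $f$ \emph{exactly}: the two shifted one-dimensional Gaussians cross at the midpoint $-\abs{t}/2$, which yields the closed form $f(t) = \frac{4}{\sqrt{2\pi}}\int_0^{\abs{t}/2\alpha} e^{-x^2/2}\,dx$, and the Lipschitz bound then follows by bounding the integrand by $1$. You instead prove only the inequality, via the standard estimate $\int_{\R}\abs{\varphi_{\alpha}(s+\tau)-\varphi_{\alpha}(s)}\,ds \leq \tau\,\norm{\varphi_{\alpha}'}_1$ together with $\norm{\varphi_{\alpha}'}_1 = 2\varphi_{\alpha}(0)$. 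The two give the same constant because the paper's bound amounts to replacing $e^{-x^2/2}$ by its supremum, attained at $0$, which is exactly $\varphi_{\alpha}(0)$ up to normalization. What the paper's exact formula buys is extra information: it shows the bound is sharp as $t \to 0$ and that $f(t) \to 2$ as $\abs{t}\to\infty$, so the linear bound is only informative for small $t$. What your argument buys is generality (it works verbatim for any kernel whose one-dimensional factor has finite total variation) plus the genuinely useful observation you flag at the end: the naive $d$-dimensional gradient bound $f(t)\leq \abs{t}\int\abs{\nabla K_{\alpha}}$ picks up a $\sqrt{d}$ factor, and the rotation-plus-product reduction to a single coordinate is what makes the constant dimension-free. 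That reduction is also implicitly required for the paper's ``exercise,'' so your write-up is arguably the more complete of the two.
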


\begin{proof}
  Change coordinates so that $s = -\frac{\abs{t}}{2} e_1$ and $s+t = \frac{\abs{t}}{2} e_1$. Then by symmetry
   \begin{align*}
     f(t) &= 2\left[ \int_{x_1 \geq -\frac{\abs{t}}{2}} K_{\alpha}(x) \, dx - \int_{x_1 \geq \frac{\abs{t}}{2}} K_{\alpha}(x) \, dx \right]\\
     &= 4 \int_{0 \leq x_1 \leq \frac{\abs{t}}{2}} K_{\alpha}(x) \, dx \\
      &= \frac{4}{\alpha^d(2\pi)^{d/2}} 
\int_{0}^{\frac{\abs{t}}{2}} e^{-x_1^2/2\alpha^2} \, dx_1 
\int_{-\infty}^{\infty} e^{-x_2^2/2\alpha^2} \, dx_2 \cdots
        \int_{-\infty}^{\infty} e^{-x_d^2/2\alpha^2} \, dx_d \\
          &  =\frac{4}{\alpha \sqrt{2\pi}} \int_{0}^{\frac{\abs{t}}{2}} e^{-x_1^2/2\alpha^2} \, dx_1
  \end{align*}

  It follows that $f(t) \leq \frac{4}{\alpha\sqrt{2\pi}} \int_0^{\abs{t}/2}\,dx_1 = \frac{2}{\alpha\sqrt{2\pi}} \abs{t}$.
\end{proof}

Thus by Theorem~\ref{thm:stability3} we have the following

\begin{corollary} \label{cor:gaussian}
  If $\norm{h}_{\infty} \leq M$ then $h*K_{\alpha}$ is $\frac{2M}{\alpha\sqrt{2\pi}}$-Lipschitz.
\end{corollary}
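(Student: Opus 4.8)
The plan is to read the corollary off directly from Theorem~\ref{thm:stability3} and the preceding lemma, so that essentially the only work is assembling the pieces and checking one arithmetic simplification. First I would invoke the lemma: for the Gaussian kernel $K_\alpha$ we have $\int \abs{K_\alpha(s+t) - K_\alpha(s)}\,ds \leq \frac{2}{\alpha\sqrt{2\pi}}\abs{t}$ for all $t \in \R^d$. This is precisely the hypothesis of Theorem~\ref{thm:stability3} with $D = \frac{2}{\alpha\sqrt{2\pi}}$. Since $\norm{h}_{\infty} \leq M$ by assumption, Theorem~\ref{thm:stability3} yields that $h * K_\alpha$ is $MD$-Lipschitz, i.e.\ $\frac{2M}{\alpha\sqrt{2\pi}}$-Lipschitz. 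Finally I would simplify the constant, using $\frac{2}{\sqrt{2\pi}} = \sqrt{\frac{4}{2\pi}} = \sqrt{\frac{2}{\pi}} = \left(\frac{2}{\pi}\right)^{1/2}$, so that the Lipschitz constant is $\left(\frac{2}{\pi}\right)^{1/2}\frac{M}{\alpha}$, exactly as claimed.

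The one genuinely substantive step is hidden inside the lemma (``it is an exercise''), and if I had to supply that computation I would argue as follows. By the rotational symmetry of the Gaussian I may assume $t = (\abs{t},0,\ldots,0)$, and since $K_\alpha$ factors as a product of one-dimensional Gaussians $\phi(x) = \frac{1}{\alpha\sqrt{2\pi}}e^{-x^2/2\alpha^2}$, the difference $K_\alpha(s+t) - K_\alpha(s)$ has the same sign as $\phi(s_1 + \abs{t}) - \phi(s_1)$, so Fubini collapses the $d$-dimensional integral to the one-dimensional integral $\int_{\R} \abs{\phi(x+\abs{t}) - \phi(x)}\,dx$. Because $\phi$ is unimodal and even, $\phi(x+\abs{t}) > \phi(x)$ precisely when $x < -\abs{t}/2$; splitting the integral at $x = -\abs{t}/2$ and evaluating the two resulting differences of Gaussian tails gives $4\bigl(\Phi_\alpha(\abs{t}/2) - \tfrac12\bigr) = \frac{4}{\sqrt{2\pi}}\int_0^{\abs{t}/2\alpha} e^{-x^2/2}\,dx$, which is the displayed formula for $f(t)$ in the lemma. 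Bounding $e^{-x^2/2} \leq 1$ on the interval of integration then gives $f(t) \leq \frac{4}{\sqrt{2\pi}}\cdot\frac{\abs{t}}{2\alpha} = \frac{2}{\alpha\sqrt{2\pi}}\abs{t}$.

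I do not anticipate any real obstacle: once Theorem~\ref{thm:stability3} and the lemma are in hand the corollary is immediate, and the lemma's computation is a standard manipulation of Gaussian integrals. The only point requiring a little care is tracking the bandwidth $\alpha$ through the change of variables $x \mapsto x/\alpha$ in the final integral, so that the constant emerges as $\frac{2}{\alpha\sqrt{2\pi}}$ rather than $\frac{2}{\sqrt{2\pi}}$, and then recognizing that $\frac{2}{\sqrt{2\pi}}$ is the same as $\left(\frac{2}{\pi}\right)^{1/2}$.
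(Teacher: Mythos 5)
Your proposal is correct and follows exactly the paper's route: the corollary is obtained by feeding the lemma's bound $D = \frac{2}{\alpha\sqrt{2\pi}}$ into Theorem~\ref{thm:stability3} and simplifying $\frac{2}{\sqrt{2\pi}} = \left(\frac{2}{\pi}\right)^{1/2}$. Your filled-in computation of the lemma's ``exercise'' (reducing to one dimension by symmetry and Fubini, splitting at $x = -\abs{t}/2$, and bounding the integrand by $1$) is also correct and matches the displayed formula $\frac{4}{\sqrt{2\pi}}\int_0^{\abs{t}/2\alpha} e^{-x^2/2}\,dx$.
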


In practice, the function $h$ will rarely be essentially bounded, but this can be arranged by setting it to be $0$ outside a closed ball centered at a specified configuration.

\subsection{Sharpness of the  Lipschitz constants}
\label{sec:sharpness}

Assume that $K_{\alpha}: \R^d \to \R$ is symmetric in the first variable.
Let $h:\R^d\to\R$ be defined by $h(x) = 1$ if $x_1\geq 0$ and $-1$ otherwise.
Let $g(t) = h*K_{\alpha}(te_1) - h*K_{\alpha}(-te_1)$.
Then we calculate
\begin{align*}
  h*K_{\alpha}(te_1) &= \int_{\R^d} h(te_1-x) K_{\alpha}(x)\,dx \\
                     &= \int_{x_1 \leq t} K_{\alpha}(x)\,dx - \int_{x_1>t} K_{\alpha}(x)\,dx\\
  &= \sign(t) \int_{-\abs{t} \leq x_1 \leq \abs{t}} K_{\alpha}(x)\,dx.
\end{align*}
So\
\begin{equation*}
   g(t)= 2 \sign(t) \int_{-\abs{t} \leq x_1 \leq \abs{t}} K_{\alpha}(x)\,dx
   = 4 \sign(t) \int_{0 \leq x_1 \leq \abs{t}} K_{\alpha}(x)\,dx.
\end{equation*}

Let $K_{\alpha}$ be the Gaussian kernel. Then
\begin{align*}
g(t) &= \frac{4}{\alpha^d (2\pi)^{d/2}} \int_0^t e^{-x_1^2/2\alpha^2}\,dx_1 
\int_{-\infty}^{\infty} e^{-x_2^2/2\alpha^2} \, dx_2 \cdots
\int_{-\infty}^{\infty} e^{-x_d^2/2\alpha^2} \, dx_d \\
&= \frac{4}{\alpha\sqrt{2\pi}} \int_0^t e^{-x_1^2/2\alpha^2}\,dx_1
\end{align*}
It follows that $g(t)$
converges to $\frac{4t}{\alpha\sqrt{2\pi}}$ as $t$ approaches $0$ by the first fundamental theorem of calculus. Hence, the Lipschitz constant 
given in Corollary~\ref{cor:gaussian} is optimal. 

When $d=1$ and $K_{\alpha}$ is the triangular kernel,
$ g(t) = \frac{4\cdot 2}{\alpha V_1} \int_0^t (1-\frac{\abs{x}}{\alpha})\,dx \to \frac{4t}{\alpha}$ as $t \to 0$. So the Lipschitz constant of $h*K_{\alpha}$ is at least $\frac{2}{\alpha}$.
Hence, the Lipschitz constant given in Corollary~\ref{cor:triangular} is optimal up to at most a factor of $2$.

When $d=1$ and $K_{\alpha}$ is the Epanechnikov kernel,
$g(t) = \frac{4\cdot 3}{2 \alpha V_1} \int_0^t (1-\frac{x^2}{\alpha^2})\,dx \to \frac{3t}{\alpha}$ as $t \to 0$. So the Lipschitz constant of $h*K_{\alpha}$ is at least $\frac{3}{2\alpha}$.
Hence, the Lipschitz constant given in Corollary~\ref{cor:epanechnikov} is optimal up to at most a factor of $4$.


\subsection{Stable Computations in Practice}

Suppose that we can compute $h(x)$ for values of $x$ for which it is defined, we can sample from $K$, and that for a fixed $a \in \R^d$ 
we want to compute $g(a) = (h * K)(a) = \int_{\R^d} h(a-x)K(x) dx$. In practice, we will not be able to evaluate this integral analytically. 
We approximate $g(a)$ as follows.
Let $V$ be a random variable with probability distribution given by the kernel $K$ (one writes $V \sim K$). Let $W$ be the random variable given by $h(a-V)$. Then the expected value of $W$ is given by $E[W] = 
\int_{\R^d} h(a-x)K(x)dx = g(a)$.
We will approximate $E[W]$ by drawing a sample $\epsilon_1,\ldots,\epsilon_M$ where $\epsilon_i \sim K$ are independent. 
Then $E[W]$ can be approximated by $\overline{W}_M = \frac{1}{M} \sum_{i=1}^M h(a-\epsilon_i)$.
By the law of large numbers, $\overline{W}_M \to E[W]$, where the convergence may be taken to be in probability (the weak law) or almost surely (the strong law).
This is the justification for the computations in Section~\ref{sec:examples}.
Let us record this result.
\begin{theorem} \label{thm:lln}
Let $a \in \R^d$ and $\epsilon_1,\ldots,\epsilon_M$ be drawn independently from $K$. Then
\[
  \frac{1}{M} \sum_{i=1}^M h(a-\epsilon_i) \to g(a).
\]
\label{thm:simulation}
\end{theorem}


\subsection{Stability of the choice of kernel}
\label{sec:stability-kernel}

As should be clear, and as borne out by the experiments in Section~\ref{sec:examples}, the value of $(h*K)(\vect{a})$,
for fixed $h$ and $\vect{a}$, will certainly depend on $K$. However, there is no fragility of output with respect to this choice, as shown by the following fact. 
\begin{theorem}
 Let $h: \R^d \to \R$ be an essentially bounded function. Then the map $K \to h*K$ is Lipschitz, from $L^1(\R^d)$ to $L^{\infty}(\R^d)$.
\label{thm:choice}
\end{theorem}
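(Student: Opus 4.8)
The plan is to reduce the statement to the endpoint case of Young's convolution inequality, $\norm{h*K}_\infty \le \norm{h}_\infty\norm{K}_1$, and then exploit the linearity of convolution in its second argument.

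First I would record the underlying pointwise bound. Since $h$ is essentially bounded and $K \in L^1(\R^d)$ is integrable, the convolution $h*K$ exists everywhere (this is the integrability-times-essential-boundedness case already invoked in the text, see \cite[Section 473D]{Fremlin2000}), and for every $t \in \R^d$ we have
\[
\abs{(h*g)(t)} = \abs{\int_{\R^d} h(t-s)\, g(s)\, ds} \le \int_{\R^d} \abs{h(t-s)}\,\abs{g(s)}\, ds \le \norm{h}_\infty \int_{\R^d} \abs{g(s)}\, ds = \norm{h}_\infty\,\norm{g}_1,
\]
for any $g \in L^1(\R^d)$, where in the middle inequality I use that for each fixed $t$ the function $s \mapsto h(t-s)$ is bounded by $\norm{h}_\infty$ outside a null set, so that $\abs{h(t-s)}\,\abs{g(s)} \le \norm{h}_\infty \abs{g(s)}$ for almost every $s$. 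Taking the essential supremum over $t$ gives $\norm{h*g}_\infty \le \norm{h}_\infty\,\norm{g}_1$; in particular $h*g \in L^\infty(\R^d)$ whenever $g \in L^1(\R^d)$, so $K \mapsto h*K$ is indeed a well-defined map $L^1(\R^d) \to L^\infty(\R^d)$.

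Next I would apply this estimate with $g = K_1 - K_2$. Because convolution is linear in its second slot, $h*K_1 - h*K_2 = h*(K_1 - K_2)$ as elements of $L^\infty(\R^d)$, and the bound above gives
\[
\norm{h*K_1 - h*K_2}_\infty = \norm{h*(K_1 - K_2)}_\infty \le \norm{h}_\infty\,\norm{K_1 - K_2}_1.
\]
Hence $K \mapsto h*K$ is $\norm{h}_\infty$-Lipschitz from $L^1(\R^d)$ to $L^\infty(\R^d)$, which is the claim.

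There is no substantive obstacle here; the only points that merit a word of care are that the convolution is genuinely defined pointwise (so that the $L^\infty$ statement is meaningful), which is precisely the hypothesis regime already used in this section, and the minor bookkeeping that, although $h$ is only defined almost everywhere, for each fixed $t$ its translate is still dominated by $\norm{h}_\infty$ off a null set, so the dominated integrand argument goes through.
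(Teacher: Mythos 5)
Your proposal is correct and follows essentially the same route as the paper's own proof: both use linearity of convolution in the second argument to write $h*K_1 - h*K_2 = h*(K_1-K_2)$ and then bound the resulting integral pointwise by $\norm{h}_\infty \norm{K_1-K_2}_1$. Your version simply spells out the well-definedness of the map and the a.e.\ domination more explicitly than the paper does.
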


\begin{proof}
  Let $\phi:L^{1}(\R^d) \to L^{\infty}(\R^d)$ be given by $\phi(K) = h*K$.
  For $x \in \R^d$,
  \[
    \abs{\left[ \phi(K) - \phi(K') \right] (x) } \leq \int \abs{h(x-t)}\,\abs{K(t)-K'(t)}\,dt \leq \norm{h}_{\infty} \norm{K-K'}_1. 
  \]
\end{proof}

\subsection{Bandwidth selection -- theoretical considerations}
\label{sec:bandwidth-theory}

After choosing a family of kernels, such as the Gaussian kernels $K_{\alpha}$ described in Section~\ref{sec:kernels}, the most important choice in implementing the method described here is the choice of bandwidth $\alpha$. 

Choosing the amount of smoothing is a well-studied problem in nonparametric regression, where increasing the bandwidth decreases the estimation variance, but increases the squared bias. Both of these terms contribute to the error. A bandwidth which optimizes this trade-off may be estimated using cross-validation. 
A proper understanding of this problem in our situation requires analysis that goes beyond the scope of the present paper. 

However, we offer some heuristics for the choice of bandwidth. First, it may be chosen to obtain a desired amount of smoothness of $h*K_{\alpha}$. For example, we may want $h*K_{\alpha}$ to be $1$-Lipschitz. Second, it seems reasonable to choose the bandwidth to (at least) equal the level of estimated noise of the input data.
One may combine these two to find the minimum bandwidth that satisfies both requirements.



\section{Application to persistent homology computations}
\label{sec:ph-computations}

Now let us apply the results of the previous section to persistent homology.
Assume we have a persistent homology computation, $\Comp$, 
with input the real numbers $a_1,\ldots,a_n$. 
If our computation is defined for all $a = (a_1,\ldots,a_n) \in \R^d$ and results in real values then we may proceed.

If not, we may reduce the more general situation to the one above as follows.
We need $h$ to be defined on all of $\R^n$ so that convolutions with a Gaussian are well defined.
Let $\Output$ be the set of outputs of this computation.
Let $D \subseteq \R^n$ be the set of all inputs for which $\Comp$ is defined.
If $D \neq \R^n$ then add a state $\emptyset$ to $\Output$ 
and say that the computation sends all points in $\R^n - D$ to $\emptyset$.
Thus we encode this computation as a function 
$H: \R^n \to \Output$. 
Let $p$ be a real-valued function on $\Output$ with $p(\emptyset)=0$.
Let $h = p \circ H: \R^{n} \to \R$.
We will need $h$ to be (Lebesgue) measurable.

To make this less abstract, we show how the instabilities described in Sections \ref{subsec:GCS} and \ref{subsec:PC} can be addressed by this method.

\begin{example}
\emph{Stable persistence located at a point.}
\label{ex:component}
We return to Example \ref{ex:HF}, where we have a geometric line graph $\mathcal{K}$ with $n$ vertices $v_1, \ldots, v_n$, and edges $e_i = (v_i, v_{i+1})$ for $i = 1, \ldots n-1.$
To produce a filtration of the type used in this example, we just need to know $n$ function values.
More precisely, our persistence computation takes as input a vector $\vect{a} = (a_1,\ldots,a_n) \in \R^n$, from which we obtain a piecewise linear function, $F_\vect{a}$, on $\mathcal{K}$ determined by
$F_\vect{a}(v_i)=a_i$.
Next we consider the corresponding abstract simplicial complex $K$ and filtration $f_\vect{a}$. 
Then we compute the persistence diagram $\Dgm_0(f_\vect{a})$.
This defines a function $H: \R^n \to \Output$, where $H(\vect{a}) = \Dgm_0(f_\vect{a})$. 

Now fix a specific vertex $x$ in $K$. A given diagram $\Dgm_0(f_\vect{a})$ either contains a point $u(x) = (b(x),d(x))$ that represents a persistent connected component born at $x$ in the filtration,
or it does not. In the former case, we define $p_x(\Dgm_0(f_\vect{a})) = d(x) - b(x),$ and in the latter we define $p(x)(\Dgm_0(f_\vect{a})) = 0$; that is, we map the diagram
to the persistence of the connected component created by the addition of this specific vertex.
Note that whether or not $p_x$ is non-zero depends on whether or not $x$ is a local minimum.

The discontinuity of the function $h_x = p_x \circ H: \R^n \to \R$ expresses the instability of localizing the persistence of a connected component. 
Referring to Figure \ref{fig:NoisyZeroDiag}, suppose that the vectors $\vect{a}$ and $\vect{e}$ produce the functions $f$ and $g$, respectively, and that the vertex $x$ is as marked in the figure. Then $h_x(\vect{a})$ is the persistence of $u$, while $h_x(\vect{e})$ is the persistence
of $v'$. 
Corollaries \ref{cor:triangular} and \ref{cor:epanechnikov}
guarantee that smoothing $h_x$ by a Triangular kernel and Epanechnikov kernel will result in a locally Lipschitz function.

To be able to convolve with the Gaussian kernel and apply Corollary~\ref{cor:gaussian} we need $h_{\sigma}$ to be essentially bounded. We can arrange this by specifying that the domain $D$ of $\Comp$ be compact and that $p_{\sigma}$ be bounded. This requires that all of the persistence pairs in the output of $\Comp$ be finite. This can be arranged by truncating at some value $M$ or by applying extended persistence \cite{cseh:extendingP}.
The resulting $h_x$ is Lipschitz. 

The experiments in Section~\ref{sec:examples} show how this works in practice.

\end{example}

\begin{example}
\emph{Stable persistence located at an edge.}
\label{ex:edge}
We return to Example \ref{ex:FDC}. In this case, $K$ is the full complex on $n$ vertices, and we start with $n$ ordered points in the plane which lead to a piecewise-linear curve $C$.
That is, $\Comp$ takes as input a vector $\vect{a} \in \R^{2n}$ and places a vertex $v_i$ at $(a_{2i-1}, a_{2i})$, thus creating a curve $C_{\vect{a}}.$ This leads to a filtration
$f_\vect{a}$ of $K$ and finally we produce $\Dgm_1(f_\vect{a}) \in \Output$. As before, $H(\vect{a}) = \Dgm_1(f_\vect{a})$ defines a function $H: \R^{2n} \to \Output.$

If we fix a specific edge $\sigma$, we can proceed as in Example \ref{ex:component} by defining the function $p_{\sigma}$ and thus
$h_{\sigma} = p_{\sigma} \circ H$. For example, taking $\sigma = (A,D)$ in Figure \ref{fig:EC} and letting $\vect{a}$ be the vector which led to that specific point configuration, we have $h_{\sigma}(\vect{a})$ equal to the persistence of $u$. As above, $g_{\sigma} = h_{\sigma}* K_{\alpha}$ is (locally) Lipschitz.

%
%

\end{example}

\begin{example}
\emph{Stable persistence of generating cycles.}
Instead of tracking which $j$-simplex creates a persistent homology class, a persistent homology algorithm may record a $j$-cycle, $\gamma$, that represents the persistence class. In this case, we can define $p_{\gamma}: \Output \to \R$ to be $d-b$ if $\gamma$ represents a persistence pair $[b,d)$ or otherwise $0$. Let $h_{\gamma} = p_{\gamma}H$ and then $g_{\gamma} = h_{\gamma} * K_{\alpha}$ is (locally) Lipschitz.
\end{example}

\begin{example}
 \emph{Stability in density-thresholding choice.}
\label{ex:DTC}
Let $Y$ be the point cloud on the bottom-left of Figure \ref{fig:FuzzyOutliers}, which we recall was created from the point cloud on the top-left by adding three outlier points. Consider any de-noising process parametrized by some real numbers.
For a specific example, let $\vect{k} = (\delta, \epsilon)$.
For each $y \in Y$, let $C_{\delta}(y) = \{x \in Y \mid ||x - y|| \leq \delta\}$.
Then define
$$
Y_{\epsilon}^{\delta} = \{y \in Y \mid \frac{|C_{\delta}(y)|}{|Y|} \geq \epsilon\}.
$$
One then applies the Vietoris-Rips construction to obtain a filtered abstract simplicial complex from $Y_{\epsilon}^{\delta}$, and then computes $\Dgm_1(Y_{\epsilon}^{\delta}).$
We may consider the input of our persistent homology computation $\Comp$ to be $a_1,\ldots,a_{2n},\delta,\epsilon$: that is, the coordinates of the vertices and the parameter values.
However, we may also take the coordinates to be fixed and only consider the parameters to be our input.
Doing this, we obtain $H: \R^2 \to \Output.$
In this case, define $p(D) = \max_{u \in D} \pers(u)$ for any degree-one diagram $D$.
Then the discontinuity of the function $h: \R^2 \to \R$ given by 
\[
\vect{k} = (\delta, \epsilon) \mapsto \Dgm_1(Y_{\epsilon}^{\delta}) \mapsto p(\Dgm_1(Y_{\epsilon}^{\delta}))
\]
expresses the instability of the threshold-parameter choice referred to in Section \ref{subsec:PC}.
If $\vect{k}$ is chosen so that the three outlier points are removed, then $h(\vect{k})$ will be the persistence of
the most prominent point on the top-right of Figure \ref{fig:FuzzyOutliers}. On the other hand, a very nearby choice of $\vect{k}$ might
fail to remove these points, and we would get the persistence of the most prominent point on the bottom-right of Figure~\ref{fig:FuzzyOutliers}.
 As above, $g_{\sigma} = h_{\sigma}* K_{\alpha}$ is (locally) Lipschitz.
\end{example}

\section{Further analysis and discussion} \label{sec:examples}

This section more deeply investigates some of the examples above
and some related practical issues.



\subsection{First line-graph experiment}
\label{sec:first-line-graph}

First we explore Example \ref{ex:component},
where the input to a persistent homology computation is a choice of function-values on the vertices
of a simplicial complex. Specifically, we consider a line graph $\mathcal{K}$ with vertices $v_1, \ldots ,v_7$, and the initial input choice
$\vect{a} = (10,11,12.5, 13, 9.9,20,1).$ The left side of Figure \ref{fig:PathGraph} shows the graph of the PL-function $F_\vect{a}$, and the persistence
diagram $H(\vect{a})$ is in the middle. Ignoring the highest-persistence point $(1,20)$, the high-persistence point $(9.9,20)$ and the medium-persistence one $(10,13)$ are created by the additions
of $v_5$ and $v_1$, respectively; that is, $h_5(\vect{a}) = 10.01$ and $h_1(\vect{a}) = 3.$ These values are of course unstable to perturbations
of $\vect{a}$: for instance, if we switch the first and fifth entries of $\vect{a}$, the reader can check that $h_5((9.9,11,12.5,13,10,20,1)) = 3.$ 
\begin{figure}[ht]
  \centering \includegraphics[width = \textwidth]{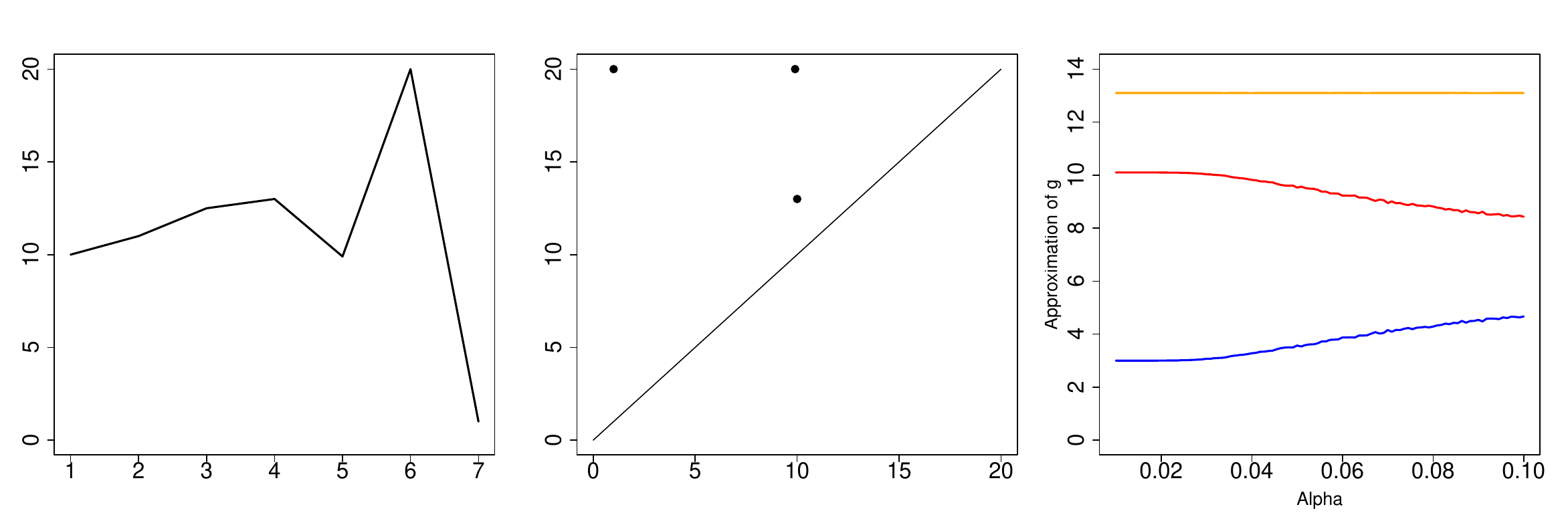}
\caption{Input and results of first line graph experiment Left: a graph of the function $F_\vect{a}$ defined on a line graph with seven vertices. Middle: the persistence diagram
$H(\vect{a}) = \Dgm_0(f_\vect{a})$. We follow the extended persistence convention and pair the global min with the global max.
Right: results of the experiment. Middle graph shows the value of $g_{5,\alpha}(\vect{a})$ versus $\alpha$; bottom graph shows $g_{1,\alpha}(\vect{a})$ versus $\alpha$; top graph shows their sum.}
\label{fig:PathGraph}
\end{figure}
Let $K_{\alpha}$ be a seven-dimensional Gaussian kernel with mean at the origin and bandwidth $\alpha$. For each $i = 1, \ldots, 7,$ put
$g_{i,\alpha} = h_i * K_{\alpha}$. 
The right side of Figure \ref{fig:PathGraph} shows graphs of the approximate values of $g_{5,\alpha}(\vect{a})$ and $g_{1,\alpha}(\vect{a})$, plotted against $\alpha$, as well as a graph of their sum. There were $100$ evenly spaced values of $\alpha$ used, ranging from  $\alpha = 0.01$ to $\alpha = 0.1$.
To make these graphs, we followed the approximation procedure based on Theorem \ref{thm:simulation}. For each fixed $\alpha$, we took $N = 10000$ independent
draws $\vect{\epsilon}_1, \ldots \vect{\epsilon}_{10000}$ from $K_{\alpha}$, and computed
$$
g_5(\vect{a}) \approx \frac{1}{10000} \sum_{i=1}^{10000} h_5(\vect{a} + \vect{\epsilon}_i),
$$
with an identical procedure for $g_1(\vect{a}).$ Figure \ref{fig:PathGraph} also plots $g_1(a) + g_5(a)$ to demonstrate that, for the bandwidths considered, vertices $v_1$ and $v_5$ are together responsible for a consistent total amount of persistence but are competing for which generates the high persistence point.

\subsection{Second line-graph experiment}

Again we explore Example \ref{ex:component}, this time with the input $\vect{a} = (5,1.1,1,1.05,15)$ to a persistent homology computation that builds a filtration
on a line graph with five vertices $v_1, \ldots v_5$. The function $F_\vect{a}$, whose graph is on the left of Figure \ref{fig:FlatGraph}, has a global min at $v_3$.
From the diagram in the middle, we see $h_3(\vect{a}) = 15 -1 = 14$. Note that $h_i(\vect{a}) = 0$ for $i \neq 3$, since only one component
is created during the entire filtration. On the right, we see convolved values of these functions for $100$ evenly spaced choices of bandwidth between $0.01$ and $0.5$, with notation and computation procedure exactly as in \ref{sec:first-line-graph} above.
\begin{figure}[ht]
\centering \includegraphics[width = \textwidth]{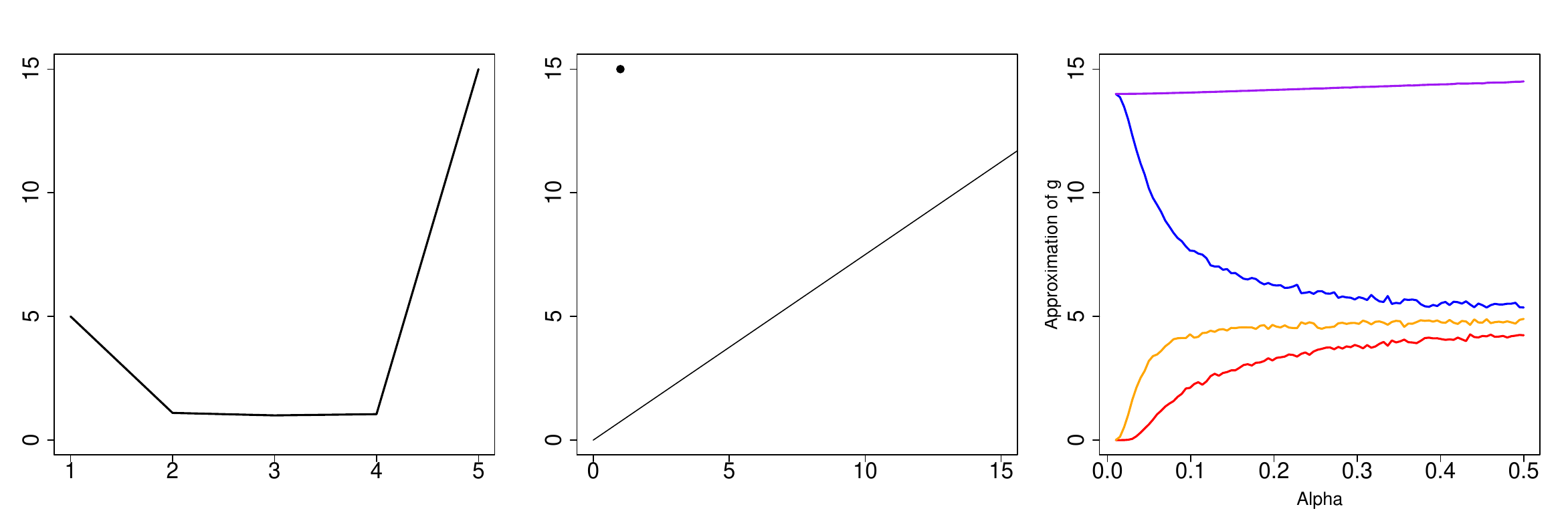}
\caption{Input and results of second line graph experiment Left: a graph of the function $F_\vect{a}$ defined on a line graph with seven vertices. Middle: the persistence diagram
$H(\vect{a}) = \Dgm_0(f_\vect{a})$. We follow the extended persistence convention and pair the global min with the global max.
Right: results of experiment. Moving from bottom to top, the values of $g_{2,\alpha}(\vect{a}), g_{4,\alpha}(\vect{a}), g_{3,\alpha}(\vect{a})$, and their sum, all
plotted against $\alpha.$ }
\label{fig:FlatGraph}
\end{figure}

\subsection{Distance-to-a-curve experiment}
\label{sec:curve}

Next we reconsider Example \ref{ex:edge}.
Let $C$ be the PL-curve with nine vertices on the left side of Figure \ref{fig:SmallBigDiamond}.
In our language, $C = C_\vect{a}$, where the input vector $\vect{a}$ specifies the coordinates of the nine vertices:
$v_1 = (0,0.1), v_2 = (1,1), v_3 = (2,0.12), v_4 = (7,5), v_5 = (12,0), 
v_6 = (7,-5), v_7 = (2,-0.12), v_8 = (1,-1), v_9 = (0,-0.1).$
Following the vocabulary of Example \ref{ex:FDC}, this curve placement leads to a order-preserving function $f_{\vect{a}}$ on the abstract
full complex $K$ on nine vertices. 

Its degree-one persistence diagram $H(\vect{a}) = \Dgm_1(f_{\vect{a}})$, in the middle of the same figure, has only two off-diagonal points.
The first, at $(0.2,10)$, is created by the positive edge between $v_1$ and $v_9$, while the second, at $(0.23,2)$, comes
from the edge between $v_3$ and $v_7$.
Thus we have $h_{1,9}(\vect{a}) = 9.8$ and $h_{3,7}(\vect{a}) = 1.77$. As usual, these values are highly unstable to small perturbations in the vertex positions.
 \begin{figure}[ht]
\centering \includegraphics[width = \textwidth]{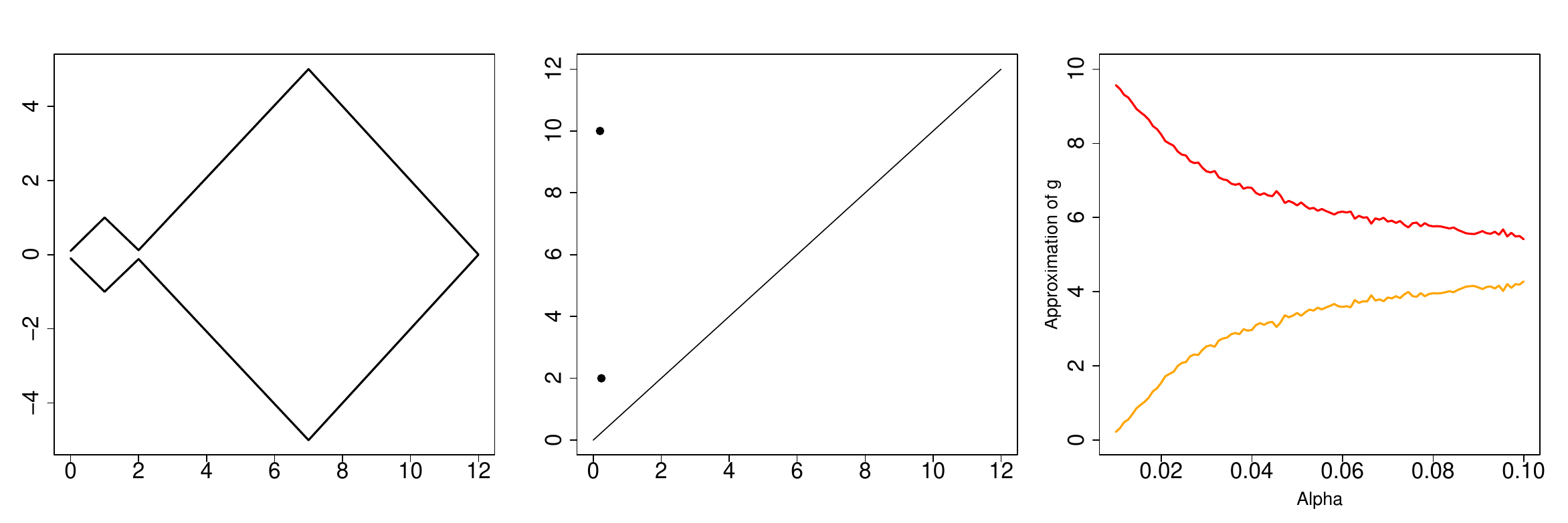}
\caption{Input and results of distance-to-curve experiment. Left: the PL plane curve $C_{\vect{a}}$ whose nine vertices are defined in the text. Middle: the persistence diagram
$H(\vect{a}) = \Dgm_1(f_{\vect{a}})$. 
Right: Results of experiment. Top graph shows the value of $g_{1,9,\alpha}(\vect{a})$ versus $\alpha$, bottom graph shows $g_{3,7,\alpha}(\vect{a})$ versus $\alpha$.}
\label{fig:SmallBigDiamond}
\end{figure}
In very similar fashion to the last experiment, we then computed approximate values at $\vect{a}$ for the convolved functions 
$g_{1,9,\alpha} = h_{1,9} * K_{\alpha}$ and $g_{3,7,\alpha} = h_{3,7} * K_{\alpha}$
where $K_{\alpha}$ was an eighteen-dimensional Gaussian kernel with bandwidth $\alpha$. The results appear on the right side of Figure \ref{fig:SmallBigDiamond}.



\subsection{Locating a point in the domain}

Let $u = (9.9,20)$ be one of the high-persistence points in the diagram for our first line-graph experiment. It is accurate to say that $u$ was created, for this specific persistent homology computation, by the addition of $v_5$ to the filtration. However, it is also a potentially misleading thing to say.

We propose that the difference between the persistence of $u$ and the values of the convolutions $g_{5,\alpha}(\vect{a})$ might be seen as an indicator for how confidently one should locate $u$ at $v_5$. The graphs on the right side of Figure \ref{fig:PathGraph} tell us that this confidence should be low. On the other hand, the other high-persistence point $w = (1,20)$ is created by the addition of $v_7$.
It turns out that $g_{7,\alpha}(\vect{a})$ remains very close to $19$ for all $\alpha$ within a reasonable range.

\subsection{Spreading out a point in the domain}

Alternatively, one might choose to give $u$ a more fuzzy location. A reasonable idea would be to spread out its location between vertices $v_5$ and $v_1$, since $v_1$ is responsible for creating the same component in a nearby filtration. The graphs in Figure \ref{fig:PathGraph} bear this out: note that the sum of the two convolution values $g_{5,\alpha}(\vect{a}) + g_{1,\alpha}(\vect{a})$ is always very close to the sum of the persistences of the components created by $v_1$ and $v_5$.
Similarly, in the second line-graph experiment, it would be reasonable to smear the location of the only point throughout the immediate neighborhood of $v_3$.

\subsection{Convolved values as features}

One could also use the values of $g_i$ or $g_{i,j}$ as features in a machine-learning scheme. That is, the vector $(g_{1,\alpha}(\vect{a}), \ldots, g_{7,\alpha}(\vect{a}))$ could be used as a summary feature of both the filtration created by $\vect{a}$ and the noise model $K_{\alpha}$. The stabilities offered by Corollary \ref{cor:gaussian} and Theorem \ref{thm:choice} make this an appealing option.

\subsection{Bandwidth selection -- in practice}
\label{sec:bandwidth-practice}

Our procedures depend on a free parameter, the bandwidth. For example, 
we chose a bandwidth of $3$
in Example~\ref{ex:double-annulus}.
Here we consider a slightly simpler example and consider the effect of varying the bandwidth.

We sample 1000 points uniformly from an annulus of inner and outer radius $20$ and $40$. 
Using Dionysus~\cite{dionysus}, we compute the $1$-dimensional persistent homology of the alpha complex of our sample and obtain a representative cycle for the longest bar. However, the embedded location of this cycle is unstable. We would like to quantify and visualize the uncertainty of this location. To do so, we consider a grid of squares with edge-length $1$. We perturb the sampled points 1000 times by adding Gaussian noise and find the proportion of trials in which the representative cycle produced by Dionysus intersects each square. By performing this procedure simultaneously for every square in the grid, we obtain Figure \ref{fig:bandwidth}. When the standard deviation of the Gaussian noise is very small, the representative cycle barely changes between perturbations, resulting in a small number of squares with high probability of intersecting the cycle. As the standard deviation increases, the picture becomes more diffuse.

\begin{figure}
  \centering
  \includegraphics[width=\textwidth]{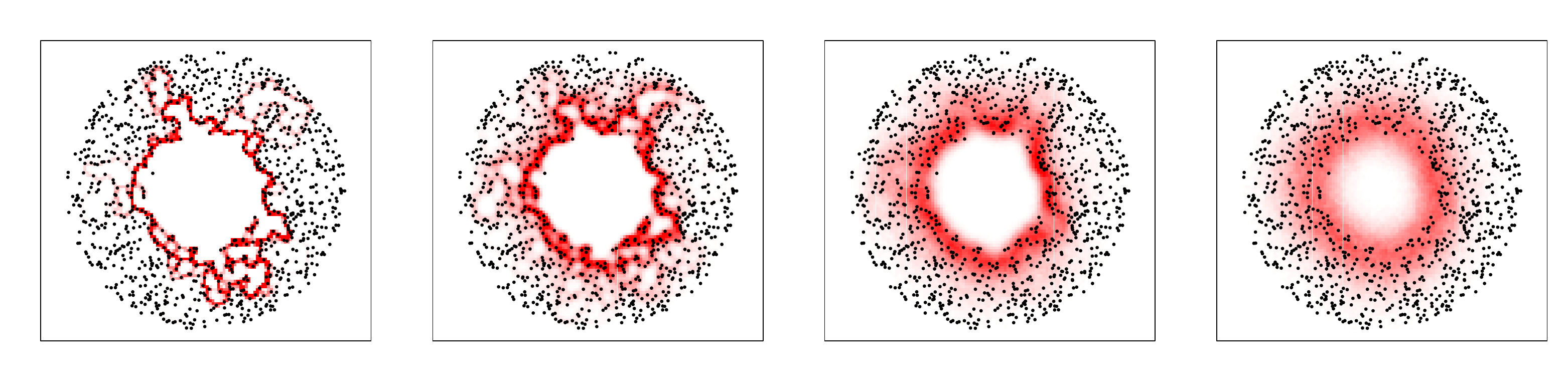}
  \caption{The probability of intersecting a representative cycle for each square in a grid. The bandwidths of the Gaussian kernel are $0.2$, $1$, $3$, and $10$. The color scale is given in Figure \ref{fig:doubleannulus}.}
  \label{fig:bandwidth}
\end{figure}

\subsection{Possible choices for the location of the generator in the brain imaging data}
\label{sec:brain-location}

In this section and the following section we expand on Example~\ref{ex:brain-artery}, which applies our method to real data.

We can obtain the estimate of $h*K$ for the observed data in Example~\ref{ex:brain-artery} for a ball of any radius by considering the distance from the location of the generator of the $28$th longest bar in one of the iterations of Algorithm~\ref{alg:main} to the location of the generator of the $28$th longest bar in the observed data,
and computing the empirical cumulative distribution function of this function. See Figure~\ref{fig:ecdf}.
In this figure we see that there is a competitor for the location of the generator of the $28$th longest bar.
So in fact, the situation with this real data is quite similar to that in the elementary synthetic examples in Sections \ref{sec:first-line-graph} and \ref{sec:curve}. 

\begin{figure}
  \centering
  \includegraphics[width=0.4\textwidth]{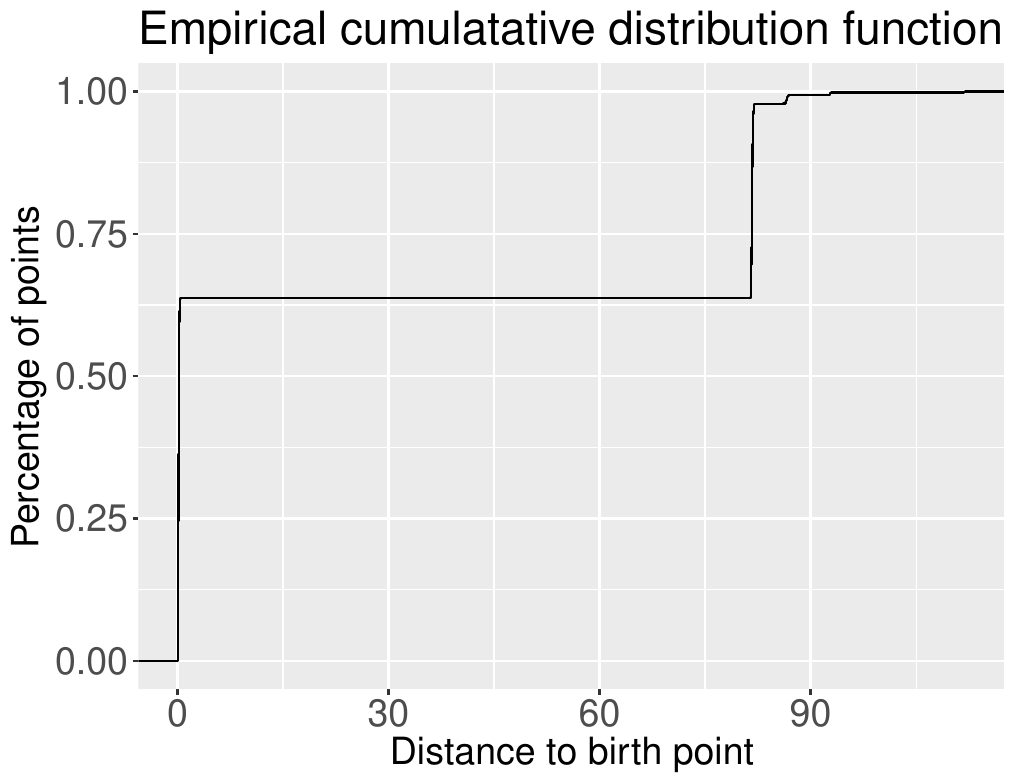}
  \caption{The empirical cumulative distribution function of the distance between the the location of the generator of the $28$th longest bar in one of the iterations of Algorithm~\ref{alg:main} to the location of the generator of the $28$th longest bar in the observed data.}
  \label{fig:ecdf}
\end{figure}

\subsection{A concrete use of our brain imaging analysis for the practitioner}
\label{sec:concrete-use} 


In Example~\ref{ex:brain-artery}, we show that under certain small perturbations, nearly two-thirds of the generators for the 28th longest bar in degree-zero persistent homology are born in a particular ball. Using the results of Section~\ref{sec:brain-location} we see that there is another ball containing a substantial proportion of the likely generators for this persistent homology class. 

By repeating this computation for all subjects, we allow the clinician wanting to compare subjects with respect to some clinically important variable (in this case age) to focus their attention on a couple of small regions of each image. In particular, the clinician can compare the morphology of the brain arteries in these neighborhoods.

\section{Future work}
\label{sec:Disc}

The work presented here opens many interesting questions and possible directions for future research. 

\subsubsection*{Machine learning}
Persistence diagrams have been used to produce features for machine-learning and statistical methods. This paper takes a first step towards the extraction of stable features that describe much of the other information produced during a persistent homology computation.
For example, one could apply the ideas presented here to construct topological features for machine learning that are not only based on critical values but also on the locations of critical points.

\subsubsection*{Visualization}
The ideas of this paper could be used to build visualization tools. For example, one might want to compute a persistence diagram, click on a point, and have the possible location candidates shown on the domain, perhaps with some sort of heat map of likelihood.

\subsubsection*{Convergence results}
What is the rate of convergence of Algorithm~\ref{alg:main} as the number of repetitions $M$ increases?

\subsubsection*{Samples with increasing numbers of points}

In all of the examples we consider, the number of sampled points is fixed. Expand the framework presented here to allow for increasing sample sizes, allowing asymptotic results to be considered.

\subsubsection*{A continuous theory}
Finally, we also hope to enrich the theory whose development has started here. 
Work in the category of metric spaces, to define
some versions of the functions $h_x$ and $g_x$, and to prove Lipschitz-continuity of the latter. We believe that
Example \ref{ex:regression} points us in the right direction.

\subsubsection*{Features in other vector spaces}
The present paper has only considered the stabilization of real-valued functions. However, one could consider functions in $\R^n$ or more generally into Banach spaces.

\subsection*{Acknowledgments}
The authors would like to thank Justin Curry, Francis Motta, Chris Tralie, and Ulrich Bauer for helpful conversations and to thank the referees for suggestions that improved the paper. The first author would like to thank the University of Florida for hosting him during the initial phase of this research.
The second author would like to acknowledge that this research was partially supported by the Southeast Center for Mathematics and Biology, an NSF-Simons Research Center for Mathematics of Complex Biological Systems, under National Science Foundation Grant No. DMS-1764406 and Simons Foundation Grant No. 594594, and that this material is based upon work supported by, or in part by, the Army Research Laboratory and the Army Research Office under contract/grant number W911NF-18-1-0307.


\end{document}